\documentclass{article}

\usepackage{microtype}
\usepackage{graphicx}
\usepackage{subfigure}
\usepackage{booktabs} 
\usepackage[table,xcdraw]{xcolor}

\usepackage{algorithm}
\usepackage{multirow}

\usepackage[accepted]{icml2025}

\usepackage{hyperref}
\usepackage{makecell}

\usepackage{amsmath}
\usepackage{amssymb}
\usepackage{mathtools}
\usepackage{amsthm}
\usepackage{url}
\usepackage[capitalize,noabbrev]{cleveref}

\theoremstyle{plain}
\newtheorem{theorem}{Theorem}[section]
\newtheorem{proposition}[theorem]{Proposition}
\newtheorem{lemma}[theorem]{Lemma}

\theoremstyle{definition}
\newtheorem{definition}[theorem]{Definition}
\newtheorem{assumption}[theorem]{Assumption}
\theoremstyle{remark}
\newtheorem{remark}[theorem]{Remark}
\usepackage[textsize=tiny]{todonotes}

\newcolumntype{M}[1]{>{\centering\arraybackslash}m{#1}}
\begin{document}

\twocolumn[
\icmltitle{Constrained Policy Optimization for Provably Fair Order Matching}




\begin{icmlauthorlist}
\icmlauthor{Zehua Cheng}{oxfordcs}
\icmlauthor{Zhipeng Wang}{mancs}
\icmlauthor{Wei Dai}{flock}
\icmlauthor{Wenhu Zhang}{hkust}
\icmlauthor{Vadzim Mahilny}{flock}
\icmlauthor{David Shi}{delulu}
\icmlauthor{Elena Jia}{delulu}
\icmlauthor{Jiahao Sun}{flock}
\end{icmlauthorlist}

\icmlaffiliation{flock}{FLock.io, London, UK}
\icmlaffiliation{oxfordcs}{Department of Computer Science, University of Oxford, Oxford, UK}
\icmlaffiliation{mancs}{Department of Computer Science, University of Manchester, Manchester, UK}
\icmlaffiliation{hkust}{Department of Computer Science and Engineering, Hong Kong University of Science and Technology, Hong Kong SAR}
\icmlaffiliation{delulu}{deluthium.ai}
\icmlcorrespondingauthor{Zehua Cheng, Elena Jia, Jiahao Sun}{zehuac@ieee.org, elena@deluthium.ai, sun@flock.io}

\icmlkeywords{Machine Learning, Order Match}
\vskip 0.3in
]

\printAffiliationsAndNotice{}  

\begin{abstract}
Automated matching engines execute millions of orders per session, yet systematic asymmetries in latency, order size, and market access compound into persistent execution disparities that erode participant trust. We formulate provably fair order matching as a Constrained Markov Decision Process and propose \textbf{CPO-FOAM} (Constrained Policy Optimization with Feedback-Optimized Adaptive Margins). An inner loop computes an analytic trust-region step on the Fisher information manifold; a PID-controlled outer loop dynamically tightens safety margins, suppressing the sawtooth oscillations endemic to Lagrangian methods under non-stationary dynamics. Group fairness (demographic parity, equalized odds) enters the CMDP cost vector while individual Lipschitz fairness is enforced deterministically via spectral normalization. We prove BIBO stability and that the integral term drives steady-state violations to zero. On LOBSTER NASDAQ data across six market regimes, CPO-FOAM recovers 95.9\% of unconstrained throughput at 2.5\% constraint violation frequency; on crypto-asset LOB data under MEV injection it captures 98.4\% of the reward envelope at 3.2\% CVF. The method scales sub-linearly to $M{=}8$ constraints, settles on-chain within one Ethereum block, and yields a $2.1\times$ reward improvement on Safety-Gymnasium~\cite{ji2023safety}, confirming domain-agnostic generalization.
\end{abstract}

\section{Introduction}

Automated trading infrastructure now mediates the majority of global financial transactions, executing millions of order-matching decisions per session across equities, derivatives, and digital asset exchanges. The design of these matching engines carries significant implications: even moderate, systematic asymmetries---arising from differential latency, heterogeneous order sizes, or privileged market access---compound over time into persistent execution disparities that erode participant trust and suppress market participation~\cite{aquilina2022quantifying}. As regulatory emphasis shifts toward auditable algorithmic governance, there is growing institutional demand for mechanism designs that internalize fairness as an explicit, enforceable constraint rather than an after-the-fact diagnostic.

We frame the formal problem as learning an order-matching policy that optimizes long-run market-quality objectives---effective spreads, order book depth, fill rates, and execution throughput---subject to hard group-level and individual-level fairness constraints measured at high frequency and subjected to continuous regulatory audit. This formulation naturally maps to a Constrained Markov Decision Process (CMDP), where the reward signal captures composite market quality and the cost vector encodes demographic parity gaps, equalized-odds-type conditional disparities, and Lipschitz individual fairness violations computed over rolling observation windows.

Existing algorithmic approaches encounter fundamental limitations in this setting. Rule-based mechanisms such as First-In-First-Out (FIFO) and Pro-Rata allocation are structurally incapable of jointly optimizing efficiency and fairness: our experiments show that FIFO achieves only 4,200 orders/s throughput with effective spreads of 2.45 basis points and 100\% constraint violation frequency (Table~\ref{tab:main_results}). Unconstrained deep reinforcement learning agents---specifically Proximal Policy Optimization (PPO)~\cite{schulman2017proximal}---reach theoretical efficiency ceilings but violate all fairness constraints. Among constrained methods, Lagrangian PPO and RCPO~\cite{tessler2018reward} suffer from dual-variable oscillations under non-stationary data, producing transient recovery lengths exceeding 145 optimization steps and large violation areas under the curve. Vanilla CPO~\cite{achiam2017constrained} provides per-update near-feasibility guarantees but struggles with noisy constraint estimates and the combinatorial structure of matching action spaces. Interior-Point Policy Optimization (IPO)~\cite{liu2020ipo} introduces barrier-coefficient tuning sensitivities without bounding transient violations. These limitations motivate the central question of this paper: \emph{can we design a fairness-sensitive matching algorithm that preserves per-update near-feasibility across multiple fairness constraints while making monotone progress on market quality, under stable multiplier dynamics and tractable computation?}

We answer this question affirmatively by introducing \textbf{CPO-FOAM} (Constrained Policy Optimization with Feedback-Optimized Adaptive Margins). On each iteration, an inner loop computes the optimal update direction by solving a low-dimensional dual problem over the Fisher information manifold, analytically recovering the trust-region-constrained reward-maximizing step. An outer loop governed by a discrete-time PID controller then dynamically adjusts safety margins $\xi_{i,k}$ to buffer the linearized constraint boundary against stochastic estimation noise, proactively tightening the feasible set before violations materialize. Individual Lipschitz fairness is enforced deterministically via per-layer spectral normalization, decoupling architectural regularity from the probabilistic CMDP constraints. We prove that the closed-loop system satisfies the Jury stability criterion and that the integrator term drives steady-state violations to zero via the Final Value Theorem.

Our contributions are as follows:
\begin{enumerate}\setlength{\itemsep}{-0.5ex}
    \item \textbf{CMDP Formulation for Auditable Fair Matching.} We formulate fair order matching as a discounted CMDP whose cost vector encodes demographic-parity gaps, equalized-odds-type conditional disparities, and Lipschitz individual fairness violations computed over rolling windows, providing an auditable interface between microstructure telemetry and enforceable fairness constraints.
    \item \textbf{CPO-FOAM Algorithm.} We construct a two-stage update combining a trust-region improvement step with a KL-nearest fairness projection and PID-controlled adaptive safety margins. We prove BIBO stability of the closed-loop system and asymptotic feasibility under bounded stochastic disturbances, with a pure recovery step guaranteeing descent on the cumulative violation energy when the feasible set is empty.
    \item \textbf{Comprehensive Empirical Validation.} We provide a reproducible evaluation stack spanning LOBSTER NASDAQ reconstructions across six non-stationary market regimes, hardware scalability profiling up to $K=500$ action dimensions and $M=8$ simultaneous constraints, on-chain output-verifiable Ethereum settlement with gas cost analysis, and domain-agnostic generalization on Safety-Gymnasium~\cite{ji2023safety}. CPO-FOAM achieves strict Pareto dominance: 0.95 bps spread, 2.5\% CVF, violation transient length of 5 steps, and $2.1\times$ reward improvement on SafetyAntVelocity over the nearest baseline.
\end{enumerate}

\section{Related Work}

\subsection{Constrained Reinforcement Learning}

Constrained Markov Decision Processes (CMDPs) provide the foundational framework for optimizing sequential objectives under inequality constraints~\cite{altman1999constrained}. Constrained Policy Optimization (CPO)~\cite{achiam2017constrained} performs trust-region updates with theoretical guarantees for near-constraint satisfaction at each iterate by solving a linearized-quadratic subproblem over the Fisher information manifold. However, CPO assumes accurate constraint estimates and encounters practical difficulties with noisy cost signals and combinatorial action spaces. Reward-Constrained Policy Optimization (RCPO)~\cite{tessler2018reward} reformulates constraints as multi-timescale Lagrangian penalties, achieving computational simplicity at the cost of oscillatory multiplier dynamics and delayed convergence to feasibility. Interior-Point Policy Optimization (IPO)~\cite{liu2020ipo} translates inequality constraints into log-barrier penalties but introduces barrier-coefficient tuning sensitivities and provides no guarantees on transient violation magnitude. To address the instability of integral-only multiplier updates, PID-Lagrangian methods~\cite{stooke2020responsive} augment the dual variable dynamics with proportional and derivative terms drawn from classical control theory, substantially dampening oscillations on benchmark suites such as Safety-Gymnasium~\cite{ji2023safety}. Our approach unifies the geometric projection guarantees of CPO with the anticipatory stabilization of PID control, achieving the convergence benefits of both paradigms.

\subsection{Algorithmic Fairness in Sequential Decision-Making}

Fairness criteria originally developed for supervised classification---including demographic parity~\cite{calders2009building}, equalized odds~\cite{hardt2016equality}, and individual (Lipschitz) fairness~\cite{dwork2012fairness}---have been increasingly adapted to sequential settings. Recent work extends group fairness constraints to contextual bandits and Markov decision processes, typically encoding fairness as trajectory-level cost constraints within CMDP formulations~\cite{wen2021algorithms}. Individual fairness, formalized as Lipschitz continuity of the decision mapping, has been enforced via spectral normalization of neural network parameters~\cite{miyato2018spectral}. However, directly applying these criteria to high-frequency financial matching introduces unique challenges: execution outcomes are discrete, constraint estimates are autocorrelated rather than i.i.d., and the combinatorial structure of multi-counterparty allocations resists standard convex relaxations. Our work bridges this gap by deriving differentiable fairness surrogates compatible with gradient-based policy optimization and cleanly separating group-level rate constraints (handled by the CMDP projection) from individual-level regularity constraints (enforced architecturally via spectral normalization).

\subsection{Market Microstructure and Mechanism Design}

The design of electronic matching engines has been studied extensively within market microstructure theory~\cite{gould2013limit}. Traditional rule-based mechanisms---FIFO price-time priority, pro-rata allocation, and size-time interpolation---are deployed across major global exchanges but lack the capacity to jointly optimize multiple market quality dimensions or adapt to non-stationary order flow. Recent proposals such as frequent batch auctions~\cite{budish2015high} and speed bumps address latency arbitrage but do not formalize multi-dimensional fairness constraints. Machine learning approaches to order execution and market making have leveraged deep reinforcement learning for optimal placement~\cite{nevmyvaka2006reinforcement} and inventory management, but typically treat fairness as an exogenous reporting metric rather than an endogenous constraint. The JAX-LOB simulator~\cite{frey2023jax} provides a GPU-accelerated limit order book environment enabling the large-sample policy optimization and ablation studies required by our algorithm. Our work extends this line by embedding the matching engine within a CMDP framework that simultaneously optimizes market quality, enforces multi-dimensional fairness, and provides formal convergence guarantees.

\subsection{Verifiable Machine Learning on Distributed Ledgers}

The deployment of learned models in trustless decentralized environments demands cryptographic verification of both outputs and processes. Output-verifiable computation frameworks submit settlement commitments (state hashes, allocation Merkle roots, fairness metrics) alongside challenge-response protocols that enable any participant to dispute inconsistencies via on-chain adjudication~\cite{teutsch2019scalable}. Process-verifiable approaches---including open replay with model weights stored on IPFS~\cite{benet2014ipfs}/Arweave~\cite{williams2019arweave} and permissioned replay via threshold key-sharing committees with BLS multi-signatures~\cite{boneh2001short}---provide stronger provenance guarantees at the cost of model confidentiality or trusted verifier assumptions. Our on-chain proof system integrates both paradigms, demonstrating that CPO-FOAM settlements clear within standard Ethereum finality windows ($\leq$12 seconds) at economically viable gas costs ($\leq$725k per batch), with false challenge rates driven below 0.001\% by rapid off-chain recomputation.

\section{Preliminaries}
We model the Limit Order Book (LOB) matching problem as a CMDP, defined by the tuple $\mathcal{M} = (\mathcal{S}, \mathcal{A}, P, r, \mathbf{c}, \gamma, \mathbf{d})$, where:

\begin{itemize}\setlength{\itemsep}{-0.5ex}
    \item \textbf{State Space} ($\mathcal{S} \subseteq \mathbb{R}^{d_s}$): The state $s_t$ comprises the LOB features (price levels $p \in \mathbb{R}^L$, volumes $v \in \mathbb{R}^L$), order flow imbalance, and a sliding-window history of fairness telemetry.
    \item \textbf{Action Space} ($\mathcal{A}$): The agent outputs a continuous matching weight vector $a_t \in \Delta^K$, representing the probability distribution over $K$ available matching counterparts (limit orders).
    \item \textbf{Transition Kernel} ($P$): $P(s_{t+1} | s_t, a_t)$ governs the stochastic evolution of the market.
    \item \textbf{Reward Function} ($r$): $r: \mathcal{S} \times \mathcal{A} \to \mathbb{R}$ quantifies market quality (e.g., spread reduction, liquidity provision).
    \item \textbf{Cost Functions} ($\mathbf{c}$): A vector of $M$ fairness costs $\mathbf{c}: \mathcal{S} \times \mathcal{A} \to \mathbb{R}^M$.
    \item \textbf{Thresholds} ($\mathbf{d}$): A vector $\mathbf{d} \in \mathbb{R}^M_{>0}$ defining strict upper bounds for expected costs.
\end{itemize}

\subsection{Differentiable Fairness Surrogates}
A critical challenge in LOBs is that execution outcomes are discrete. To enable gradient-based optimization, we derive importance-weighted surrogates.

\textbf{Group Fairness (Demographic Parity)}:
Let $g \in \{0, 1\}$ denote the sensitive attribute of an incoming order. To ensure differentiability, we formulate the cost as the squared disparity in expected allocation. We define the instantaneous cost $c_{DP}(s_t, a_t)$ utilizing the policy's predicted fill mass:

\begin{equation}\small
    c_{DP}(s_t, a_t) = \left( \frac{\mathbb{I}(g_t=1) \cdot \langle a_t, \mathbf{m}_t \rangle}{\hat{\mu}_1} - \frac{\mathbb{I}(g_t=0) \cdot \langle a_t, \mathbf{m}_t \rangle}{\hat{\mu}_0} \right)^2
\end{equation}
where $\mathbf{m}_t \in \{0,1\}^K$ is a binary mask of feasible matches, and $\hat{\mu}_g$ is the exponential moving average (EMA) of the arrival rate for group $g$. This function is strictly differentiable w.r.t. $a_t$.
\begin{figure*}[t]\centering
    \includegraphics[width=\textwidth]{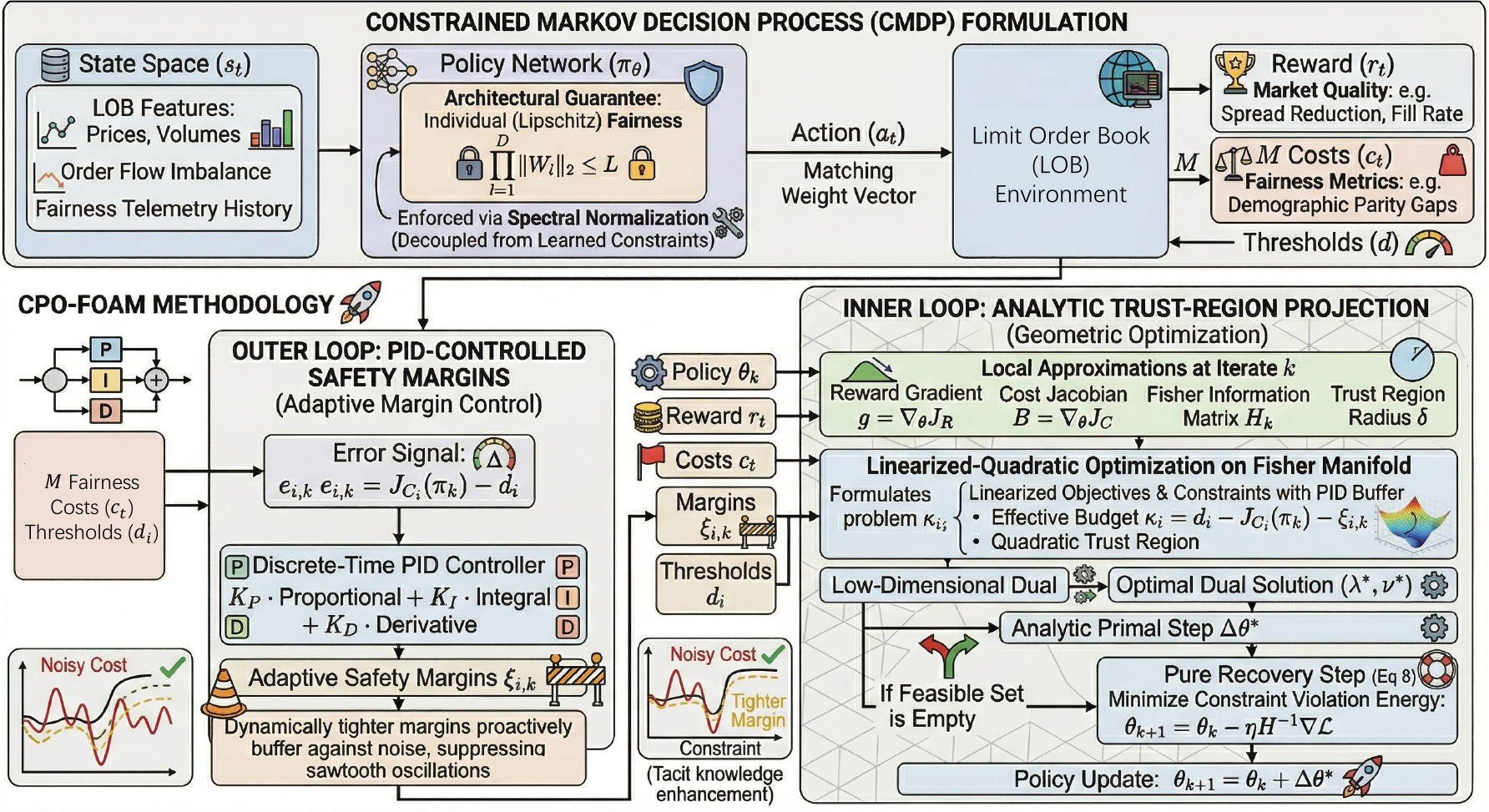}
    \caption{Overview of the CPO-FOAM (Constrained Policy Optimization with Feedback-Optimized Adaptive Margins) framework. Top: The fair order-matching problem is formulated as a Constrained Markov Decision Process (CMDP). The policy network ($\pi_\theta$) structurally enforces individual (Lipschitz) fairness via spectral normalization, effectively decoupling it from the probabilistic group fairness constraints. The Limit Order Book (LOB) environment outputs a reward signal (market quality) and $M$ fairness costs. Bottom Left: The outer loop utilizes a discrete-time PID controller to compute adaptive safety margins $\xi_{i,k}$. This proactively buffers the linearized constraint boundaries against stochastic estimation noise, suppressing the sawtooth oscillations typical of Lagrangian methods under non-stationary dynamics. Bottom Right: The inner loop computes an analytic trust-region projection on the Fisher information manifold. By solving a low-dimensional dual problem, it dictates the optimal primal step $\Delta\theta^*$. If the feasible set is empty, a pure recovery step is executed to strictly minimize constraint violation energy.}
\end{figure*}
\begin{remark}[From Instantaneous Cost to Trajectory-Level Constraint]
    The per-step cost $c_{DP}(s_t, a_t)$ is inherently high-variance because only one group contributes a nonzero term per timestep ($\mathbb{I}(g_t{=}1)$ and $\mathbb{I}(g_t{=}0)$ are mutually exclusive). However, the CMDP objective $J_{C_{DP}}(\pi) = \mathbb{E}_\tau[\sum_t \gamma^t c_{DP}(s_t, a_t)]$ aggregates over the full trajectory; under the mixing-time concentration bound (Proposition~5.5), the sample variance of $\hat{J}_{C_{DP}}$ scales as $\sigma^2 \tau_{mix} / T$, where the integrated autocorrelation time $\tau_{mix}$ accounts for the temporal dependence between successive cost samples. The EMA denominators $\hat{\mu}_g$ (decay rate $\beta{=}0.999$) act as control variates that track shifting group arrival rates, reducing estimator variance while introducing a controlled $O(1{-}\beta)$ bias. The squared formulation ensures differentiability but amplifies outliers; the trust-region constraint (Equation~\ref{eq:inner}) bounds the per-step impact of such outliers by limiting $\|\Delta\theta\|$, and the PID safety margin absorbs residual estimation noise.
\end{remark}

\textbf{Individual Fairness (Architectural Constraint)}: We require $L$-Lipschitz continuity: $D_{TV}(\pi(s), \pi(s')) \leq L \|s - s'\|$. Instead of a Lagrangian penalty, we enforce this structurally. For a policy network parameterized by weights $\{W_l\}_{l=1}^D$, we enforce the spectral norm constraint:

\begin{equation}
    \prod_{l=1}^D \|W_l\|_2 \leq L
\end{equation}

This is implemented via Spectral Normalization~\cite{miyato2018spectral,gouk2021regularisation}, projecting weights $W_l \leftarrow W_l / \max(1, \|W_l\|_2 / \sigma_l)$ at every step. This guarantees $\pi_\theta \in \Pi_{\text{Lip}}$ by construction.

At time $t$, the state $s_t \in \mathcal{S}$ encompasses the LOB state (price levels, volumes) and historical fairness telemetry. The policy $\pi_\theta(a|s)$ outputs a continuous allocation vector $a_t \in \Delta^{K}$ (the simplex over matched orders). The environment emits a scalar reward $r(s,a)$ (market quality) and a vector of fairness costs $\mathbf{c}(s,a) \in \mathbb{R}^M$.The optimization objective is:
\begin{equation}\label{eq:obj_func}
    \begin{aligned}
    \max_{\pi_\theta} \quad & J_R(\pi) = \mathbb{E}_{\tau \sim \pi} \left[ \sum_{t=0}^\infty \gamma^t r(s_t, a_t) \right] \\
    \text{s.t.} \quad & J_{C_i}(\pi) = \mathbb{E}_{\tau \sim \pi} \left[ \sum_{t=0}^\infty \gamma^t c_i(s_t, a_t) \right] \leq d_i, \quad\\
    & \forall i \in \{1, \dots, M\} \\
    & \text{Lip}(\pi_\theta) \leq L \quad (\text{Architectural Constraint})
\end{aligned}
\end{equation}

\section{Constrained Policy Optimization for Fair Order-Matching}
To solve Equation~\ref{eq:obj_func} directly via primal-dual methods is unstable in financial markets due to the high variance of $J_{C_i}$ estimates. We propose Constrained Policy Optimization with Feedback-Optimized Adaptive Margins (CPO-FOAM), which decouples the problem into two orthogonal processes:
\begin{enumerate}
    \item \textbf{Inner Loop (Geometric Optimization)}: An analytic trust-region projection to find the optimal search direction on the linearized manifold.
    \item \textbf{Outer Loop (Adaptive Margin Control)}: A PID controller that adjusts the constraint thresholds to buffer against stochastic noise.
\end{enumerate}

\subsection{Inner Loop: Analytic Trust-Region Projection}
At iteration $k$, we seek an update $\theta_{k+1} = \theta_k + \Delta \theta$. We form local approximations around $\theta_k$ where:
\begin{itemize}
    \item Objective: $\nabla_\theta J_R(\pi_k)^T \Delta \theta$
    \item Constraints: $J_{C_i}(\pi_k) + \nabla_\theta J_{C_i}(\pi_k)^T \Delta \theta \leq d_i - \xi_{i,k}$
    \item Trust Region: $\Delta \theta^T \mathbf{H}_k \Delta \theta \leq 2 \delta$.
\end{itemize}
Here, $\mathbf{H}_k$ is the Fisher Information Matrix, and $\xi_{i,k} \geq 0$ is the adaptive safety margin (formally defined in Section~\ref{sec:outer_loop}).

We first approximate the reward and cost functions locally around the current policy $\pi_k$ using first-order linearization for costs and second-order approximation for the KL-divergence constraint (Trust Region).

Let $\mathbf{g} = \nabla_\theta J_R(\pi_k)$ be the reward gradient, $\mathbf{b} = \nabla_\theta \mathbf{J}_C(\pi_k)$ be the Jacobian of the costs (matrix of size $M \times |\theta|$), and $\mathbf{H}$ be the Fisher Information Matrix (approximating the Hessian of the KL divergence).

The update direction $\Delta \theta$ is the solution to the convex optimization problem:

\begin{equation}\label{eq:inner}
    \begin{aligned}
    \max_{\Delta \theta} \quad & \mathbf{g}^T \Delta \theta \\
    \text{s.t.} \quad & \mathbf{b}_i^T \Delta \theta + J_{C_i}(\pi_k) \leq d_i - \xi_{i,k} \quad \forall i \\
    & \Delta \theta^T \mathbf{H} \Delta \theta \leq 2 \delta
\end{aligned}
\end{equation}
where $\mathbf{g} = \nabla J_R$, $\mathbf{b}_i = \nabla J_{C_i}$, and $\kappa_i = d_i - J_{C_i}(\pi_k) - \xi_{i,k}$ is the effective budget.

The term $\xi_{i,k} \ge 0$ is the Adaptive Safety Margin. In standard CPO, $\xi=0$. However, due to the high variance of financial data, "exact" feasibility on the linearized manifold often leads to mean-reverting violations in the true environment. We dynamically tune $\xi$ to buffer the projection (see Section~\ref{sec:outer_loop}.).

The primal update direction is given analytically by the dual solution. Let $\boldsymbol{\nu} \in \mathbb{R}^M_{\ge 0}$ be the Lagrange multipliers for the costs, and $\lambda \in \mathbb{R}_{\ge 0}$ for the trust region.

\begin{equation}
    \Delta \theta^* = \frac{1}{\lambda} \mathbf{H}^{-1} (\mathbf{g} - \mathbf{b}^T \boldsymbol{\nu}^*)
\end{equation}
where $\mathbf{B} = [\mathbf{b}_1, \dots, \mathbf{b}_M]^T$. The optimal dual variables $(\lambda^*, \boldsymbol{\nu}^*)$ are found by maximizing the dual function:
\begin{equation}\small
    \mathcal{D}(\lambda, \boldsymbol{\nu}) = -\frac{1}{2\lambda} (\mathbf{g} - \mathbf{B}^T \boldsymbol{\nu})^T \mathbf{H}_k^{-1} (\mathbf{g} - \mathbf{B}^T \boldsymbol{\nu}) + \boldsymbol{\nu}^T \boldsymbol{\kappa} - \frac{\lambda \delta}{2}
\end{equation}

This dual problem is low-dimensional ($M+1$ variables) and is solved efficiently via Newton-Conjugate Gradient. If the feasible set is empty, we execute a pure recovery step minimizing constraint violation.

\subsection{Outer Loop: PID-Controlled Safety Margins\label{sec:outer_loop}}

Standard CPO assumes accurate estimates of $J_C(\pi)$. In LOB environments, $J_C$ is stochastic. To prevent the ``sawtooth'' oscillation of constraints (violating $\to$ over-correcting $\to$ violating), we use a discrete-time PID controller to adjust the safety margin $\xi_{i,k}$.

Let $e_{i,k} = J_{C_i}(\pi_k) - d_i$ be the realized constraint violation at step $k$. The safety margin update follows:

\begin{equation}\small\label{eq:outer}
    \begin{aligned}
        \xi_{i, k+1} &= [ \xi_{i,k} + K_P e_{i,k} + K_I \sum_{\tau=0}^k e_{i,\tau}\\
        &+ K_D (e_{i,k} - e_{i,k-1})]_+
    \end{aligned}
\end{equation}
If the policy violates constraints ($e > 0$), the margin $\xi$ increases. Equation~\ref{eq:outer} tightens the constraint in Equation~\ref{eq:inner}, forcing the projection to target a value strictly below $d_i$ and creating a buffer zone proportional to the system's noise.

Equation~\ref{eq:outer} separation ensures mathematical consistency. The CPO step remains a memoryless geometric projection (guaranteeing the update direction is optimal locally), while the PID controller manages the system's memory and noise robustness.

If Equation~\ref{eq:inner} is infeasible (empty intersection of trust region and constraints), we execute a Pure Recovery Step:

\begin{equation}\label{eq:pure_recovery_step}
    \theta_{k+1} = \theta_k - \eta \mathbf{H}^{-1} \sum_i \nabla_\theta J_{C_i}(\pi_k)
\end{equation}
This update ignores the reward signal solely to reduce constraint violations, ensuring the agent rapidly returns to the feasible region.

\section{Theoretical Guarantees}
We provide a rigorous analysis of CPO-FOAM, characterizing its convergence, stability, and feasibility properties. We model the training process as a discrete-time dynamical system subject to bounded stochastic disturbances. Our central result proves that the feedback loop between the Inner Loop (Optimization) and Outer Loop (Control) guarantees asymptotic feasibility.

\subsection{Assumptions and Structural Properties}
Let $\Theta \subseteq \mathbb{R}^N$ be the policy parameter space. The market state $s_t$ evolves according to a Markov transition kernel $P_\theta(s'|s)$. We define the "True" cost surface $J_{C_i}(\theta)$ and the empirical estimator $\hat{J}_{C_i}(\theta)$ derived from trajectories of length $T$.

To analyze the optimization of a constrained non-convex policy on a stochastic manifold, we introduce standard regularity assumptions.

\begin{assumption}[Regularity of Objectives]
The reward function $J_R(\theta)$ and fairness cost functions $J_{C_i}(\theta)$ are twice continuously differentiable on $\Theta$. There exist constants $L_g, L_H, \mu > 0$ such that:
\begin{enumerate}\setlength{\itemsep}{-0.5ex}
    \item \textbf{Lipschitz Gradients}: $\|\nabla J_{C_i}(\theta) - \nabla J_{C_i}(\theta')\|_2 \leq L_g \|\theta - \theta'\|_2$.
    \item \textbf{Bounded Curvature}: The Hessian spectral norms are bounded, $\|\nabla^2 J_{C_i}(\theta)\|_2 \leq L_H$.
    \item \textbf{Fisher Non-Singularity}: The Fisher Information Matrix $\mathbf{H}(\theta)$ is positive definite: $\mu \mathbf{I} \preceq \mathbf{H}(\theta) \preceq M \mathbf{I}$.
\end{enumerate}
\end{assumption}

\begin{assumption}[Slater's Condition]
    The constrained problem is well-posed. There exists a strictly feasible policy $\pi_{\theta^\dagger}$ such that $J_{C_i}(\pi_{\theta^\dagger}) \leq d_i - \epsilon$ for some $\epsilon > 0$. Furthermore, the gradients of active constraints are linearly independent.
\end{assumption}

\begin{assumption}[Geometric Ergodicity]
    The Markov Chain induced by any policy $\pi_\theta$ on the LOB state space $\mathcal{S}$ is geometrically ergodic~\cite{bradley2005basic}. The $\beta$-mixing coefficient satisfies $\beta(k) \leq C \rho^k$ for constants $C > 0, \rho \in (0,1)$.
\end{assumption}

\textit{Justification}: Financial time-series are strongly autocorrelated. This assumption ensures that the dependence between samples decays exponentially, allowing us to apply mixing-time concentration bounds~\cite{paulin2015concentration,yu1994rates} rather than invalid I.I.D. assumptions.

\begin{remark}[Robustness to Ergodicity Violations]
    During extreme structural breaks (flash crashes, circuit breakers, stablecoin depegs), the geometric mixing assumption may be transiently violated, inflating the effective disturbance bound $W_{max}$. In this regime, the BIBO guarantee of Theorem~\ref{thm:bibo_stability} degrades gracefully: the steady-state violation bound scales linearly with $W_{max}$ rather than diverging, since the Jury-stable transfer function $H(z)$ attenuates disturbances by a finite gain $\|H\|_{\infty}$. The PID controller's derivative term further dampens transient spikes from sudden regime shifts. Our regime experiments (Table~\ref{tab:regimes}) empirically validate this: even under $5\times$ news shocks---which substantially violate stationarity---CVF remains below 6.8\%. A practical safeguard is to reset the integral accumulator upon detection of a regime change (e.g., via a CUSUM test on violation energy), preventing the integrator from accumulating stale error from a structurally different regime.
\end{remark}

\subsection{Inner Loop: Trust-Region Fidelity}
The Inner Loop optimizes a linear surrogate $\tilde{J}_{C}(\theta)$ subject to a quadratic trust region. We first quantify the "Simulation Gap"—the worst-case error between this surrogate and the true cost surface.

\begin{lemma}[Surrogate Approximation Bound]
    Let $\Delta \theta$ be the update computed by the Inner Loop subject to the trust region $\frac{1}{2} \Delta \theta^T \mathbf{H}_k \Delta \theta \leq \delta$. The discrepancy between the linear surrogate and the true cost is bounded by:
    $$| J_{C_i}(\theta + \Delta \theta) - \tilde{J}_{C_i}(\theta + \Delta \theta) | \leq \frac{L_H}{\mu} \delta \triangleq \epsilon_{model}$$
\end{lemma}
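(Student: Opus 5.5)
The natural route is a second-order Taylor expansion of $J_{C_i}$ about $\theta$, combined with the trust-region geometry. Here the linear surrogate is $\tilde{J}_{C_i}(\theta+\Delta\theta) = J_{C_i}(\theta) + \nabla J_{C_i}(\theta)^T\Delta\theta$.

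\textbf{Step 1: remainder bound.} The Regularity assumption makes $J_{C_i}$ twice continuously differentiable. Taylor's theorem with integral (or mean-value) remainder along the segment $\theta + s\Delta\theta$, $s\in[0,1]$, gives
\begin{equation*}
J_{C_i}(\theta+\Delta\theta) - \tilde{J}_{C_i}(\theta+\Delta\theta) = \int_0^1 (1-s)\,\Delta\theta^T \nabla^2 J_{C_i}(\theta+s\Delta\theta)\,\Delta\theta \, ds .
\end{equation*}
Applying the Bounded Curvature bound $\|\nabla^2 J_{C_i}\|_2 \le L_H$ pointwise yields
\begin{equation*}
|J_{C_i}(\theta+\Delta\theta) - \tilde{J}_{C_i}(\theta+\Delta\theta)| \le \tfrac{L_H}{2}\|\Delta\theta\|_2^2 .
\end{equation*}
This step needs the segment to lie in $\Theta$. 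I would take $\Theta$ convex, or simply $\Theta=\mathbb{R}^N$, and note this explicitly.

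\textbf{Step 2: Euclidean step length from the Fisher trust region.} Fisher Non-Singularity gives $\mathbf{H}_k \succeq \mu\mathbf{I}$, so
\begin{equation*}
\mu\|\Delta\theta\|_2^2 \le \Delta\theta^T\mathbf{H}_k\Delta\theta \le 2\delta,
\end{equation*}
hence $\|\Delta\theta\|_2^2 \le 2\delta/\mu$.

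\textbf{Step 3: combine.} Substituting Step 2 into Step 1 gives
\begin{equation*}
\tfrac{L_H}{2}\cdot\tfrac{2\delta}{\mu} = \tfrac{L_H}{\mu}\,\delta ,
\end{equation*}
which is exactly $\epsilon_{model}$. The factor $\tfrac12$ in the Taylor remainder cancels the factor $2$ from the trust-region radius.

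\textbf{Main obstacle.} The computation is routine; the real issue is bookkeeping.

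First, the bound needs the Hessian bound along the whole segment, not only at $\theta$. This is covered because the assumption is stated uniformly on $\Theta$.

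Second, the bound must hold for the update actually taken, so that update must satisfy $\Delta\theta^T\mathbf{H}_k\Delta\theta\le 2\delta$. This is true for the inner-loop solution of Equation~\ref{eq:inner}. It is not automatic for the Pure Recovery Step of Equation~\ref{eq:pure_recovery_step}. For that step I would either restrict the lemma to inner-loop updates, as the statement does, or require $\eta$ small enough that the recovery step also lies in the trust region.

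Third, $\mathbf{H}_k$ here is the Fisher matrix and need not equal the Hessian of $J_{C_i}$. The argument uses $\mathbf{H}_k$ only as a metric defining the step size, so no relation between the two is needed.
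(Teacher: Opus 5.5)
Your proposal is correct and is the paper's argument: bound the second-order Taylor remainder by $\frac{L_H}{2}\|\Delta\theta\|_2^2$, then use $\mathbf{H}_k \succeq \mu\mathbf{I}$ to get $\|\Delta\theta\|_2^2 \le 2\delta/\mu$. The only differences are cosmetic: you use the integral form of the remainder where the paper uses the Lagrange form at an intermediate point $\bar{\theta}$, and your remarks on convexity of $\Theta$ and on the recovery step make explicit what the paper leaves implicit.
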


Proof.
Consider the Taylor expansion with Lagrange remainder:
$$ J_{C_i}(\theta + \Delta \theta) = J_{C_i}(\theta) + \nabla J_{C_i}^T \Delta \theta + \frac{1}{2} \Delta \theta^T \nabla^2 J_{C_i}(\bar{\theta}) \Delta \theta $$
The surrogate corresponds to the first two terms. The error is the quadratic remainder.
$$ \text{Error} = \left| \frac{1}{2} \Delta \theta^T \nabla^2 J_{C_i}(\bar{\theta}) \Delta \theta \right| \leq \frac{L_H}{2} \|\Delta \theta\|_2^2 $$
From the trust region constraint and Assumption 1 ($\mathbf{H} \succeq \mu \mathbf{I}$):
$$ \frac{\mu}{2} \|\Delta \theta\|_2^2 \leq \frac{1}{2} \Delta \theta^T \mathbf{H} \Delta \theta \leq \delta \implies \|\Delta \theta\|_2^2 \leq \frac{2\delta}{\mu} $$
Substituting this into the error bound yields $\frac{L_H}{2} (\frac{2\delta}{\mu}) = \frac{L_H}{\mu}\delta$. \qed

\subsection{Estimation: The Stochastic Disturbance}
Standard error bounds (e.g., Hoeffding) underestimate risk in financial markets due to autocorrelation. We derive a concentration bound that accounts for the Variance Inflation Factor (VIF).

\begin{proposition}[Mixing-Time Concentration]
    Let $\hat{J}_{C_i}$ be estimated from a trajectory of length $T$. Under Assumption 3, for any confidence $\zeta \in (0,1)$, there exists a bound $\epsilon_{est}$ such that:
    $$
    \mathbb{P} \left( | \hat{J}_{C_i} - J_{C_i} | \leq \epsilon_{est} \right) \geq 1 - \zeta$$
    $$\text{with } \epsilon_{est} \propto \sigma \sqrt{\frac{\tau_{mix} \ln(1/\zeta)}{T}}$$
    where $\tau_{mix} = \frac{1+\rho}{1-\rho}$ is the integrated autocorrelation time.
\end{proposition}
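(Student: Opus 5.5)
The plan is to reduce the estimation error to a Bernstein-type concentration inequality for sums of dependent random variables, with the dependence controlled by Assumption 3 through the blocking technique of \cite{yu1994rates}. Setting the scene: the estimator is an empirical average $\hat{J}_{C_i} = \frac{1}{T}\sum_{t=1}^{T} f(s_t,a_t)$ of a bounded functional $f$. I would take $f$ to be the (discount-reweighted) per-step cost, assume $|f|\le B$, and work under the stationary law of the induced chain. The estimation error then splits into two pieces.

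First, the deterministic bias from initialization and truncation. By geometric ergodicity this bias is bounded by $C B/((1-\rho)T)$. It is of lower order than the $T^{-1/2}$ rate, so it can be absorbed into the constant in the statement's $\propto$.

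Second, the centered fluctuation $S_T=\sum_t (f_t - \mathbb{E}f)$. I would handle it in three steps.

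\textbf{Step 1: variance inflation.} Compute the asymptotic variance
\[
\sigma_{\mathrm{eff}}^2=\sigma^2\Big(1+2\sum_{k\ge1}\mathrm{corr}(f_0,f_k)\Big).
\]
The covariance inequality for $\beta$-mixing sequences bounds the $k$-th correlation by a constant multiple of $\beta(k)\le C\rho^k$. For an AR(1)-type decay $\rho^k$ this gives
\[
1+2\sum_{k\ge1}\rho^k=\frac{1+\rho}{1-\rho}=\tau_{mix},
\]
up to the constant $C$. This identifies the variance inflation factor that appears in the statement.

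\textbf{Step 2: blocking.} Partition $\{1,\dots,T\}$ into $2m$ alternating blocks of length $\ell\asymp\tau_{mix}\log T$. By Berbee's coupling lemma, the odd blocks can be replaced by independent copies at a total-variation cost of $m\beta(\ell)\le mC\rho^{\ell}$, and the same holds for the even blocks. Applying Bernstein's inequality to the $m\approx T/(2\ell)$ independent block sums, each with variance at most $\ell\,\sigma^2\tau_{mix}$, gives the tail bound
\[
\mathbb{P}(|S_T/T|>\epsilon)\le 4\exp\!\Big(-\frac{cT\epsilon^2}{\sigma^2\tau_{mix}+B\ell\epsilon}\Big)+2mC\rho^{\ell}.
\]

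\textbf{Step 3: inversion.} Choose $\ell$ so that the coupling term is at most $\zeta/2$, and set the exponential term equal to $\zeta/2$. Solving for $\epsilon$ and keeping the leading, sub-Gaussian term yields
\[
\epsilon_{est}\propto\sigma\sqrt{\tau_{mix}\ln(1/\zeta)/T}.
\]
A cleaner alternative is to cite the Bernstein inequality for Markov chains of \cite{paulin2015concentration}, in which the spectral gap $\gamma_{ps}$ plays the role of $1/\tau_{mix}$, and translate $\rho$ into a gap bound.

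The main obstacle is Step 2. Blocking naturally produces $\ell\sim\log T/\log(1/\rho)$ rather than exactly $\tau_{mix}$, so the clean prefactor $\tau_{mix}$ holds only up to logarithmic factors and constants depending on $C$. My first attempt would be the Paulin route, which avoids the logarithmic loss. I would then state explicitly that the $\propto$ in the proposition hides constants depending on $C$ and $B$, and a $\log(1/\zeta)$-dependent lower-order Bernstein term of order $B\tau_{mix}\ln(1/\zeta)/T$.
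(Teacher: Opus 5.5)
The paper gives no argument beyond citing \cite{paulin2015concentration,yu1994rates}, so your two routes are exactly what it has in mind. The gap is in Step~1, where Assumption~3 is turned into the prefactor $\sigma^2\tau_{mix}$.

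For bounded $f$, the $\beta$-mixing covariance inequality gives $|\mathrm{Cov}(f_0,f_k)|\le cB^2\beta(k)$, with the sup-norm $B$ and not $\sigma^2$. So the constant in front of $\rho^k$ in the correlation is $cCB^2/\sigma^2$, not $C$. What your blocking bound actually delivers is the leading term $\sqrt{(\sigma^2+cCB^2\tau_{mix})\ln(1/\zeta)/T}$, which scales like $B\sqrt{C}$, not $\sigma$.

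This is not slack in your argument. Rio's quantile covariance inequality improves the effective variance only to order $\sigma^2\tau_{mix}\bigl(1+\log(CB^2/\sigma^2)\bigr)$, and I believe that logarithm cannot be removed under a $\beta$-mixing hypothesis. A sketch of why: take a chain that leaves a hub state with probability $e^{-n}$ per step and then runs deterministically through $n$ states forming a set $A$, of stationary mass $p\approx ne^{-n}$. A heuristic computation suggests it satisfies $\beta(k)\le C\rho^k$ with $C,\rho$ independent of $n$. Yet the indicator of $A$ has asymptotic variance of order $n\sigma^2\approx\sigma^2\log(1/p)$, while $\tau_{mix}$ stays fixed.

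So the statement as written, with $\sigma$ and a constant depending only on $C$ and $\rho$, does not follow from Assumption~3. Absorbing $B/\sigma$ into the $\propto$, as you suggest at the end, is the only way your argument closes, and it makes the $\sigma$ vacuous.

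Your diagnosis of the obstacle is also misplaced. In your own tail bound the block length $\ell$ enters only the range term $B\ell\epsilon$, so blocking costs no logarithm in the sub-Gaussian term. The real difference between the two routes is $\sigma$ versus $B$.

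The Paulin route does produce $\sigma^2/\gamma_{ps}$, but ``translate $\rho$ into a gap bound'' is precisely what Assumption~3 does not supply. The bound $\gamma_{ps}\gtrsim 1/t_{mix}$ is for uniformly ergodic chains, which requires $\sup_x\|P^k(x,\cdot)-\mu\|_{TV}\le C\rho^k$ with $\mu$ the stationary law. For a stationary chain, $\beta(k)$ is only the $\mu$-average of this quantity. In the example above the average is tiny, but the chain started at the entrance of $A$ is a point mass for $n$ steps.

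To prove the proposition you must strengthen the hypothesis. Either of the following works:
\begin{enumerate}
\item Under uniform ergodicity, $1/\gamma_{ps}\lesssim t_{mix}\lesssim\log(4C)\,\tau_{mix}$. Paulin's Bernstein inequality then gives, up to absolute constants, $\sigma\sqrt{\log(4C)\tau_{mix}\ln(1/\zeta)/T}$ plus a term of order $B\log(4C)\tau_{mix}\ln(1/\zeta)/T$.
\item Under an $L^2(\mu)$ contraction $\|P^k\bar f\|_{L^2(\mu)}\le\rho^k\|\bar f\|_{L^2(\mu)}$ for mean-zero $\bar f$, you get $|\mathrm{Cov}(f_0,f_k)|\le\rho^k\sigma^2$. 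This gives exactly the factor $\frac{1+\rho}{1-\rho}$, which is presumably the source of the paper's formula. It also gives $\gamma_{ps}\ge 1-\rho^2$, so Paulin's inequality yields the statement with absolute constants.
\end{enumerate}

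Finally, pin down the target. A single trajectory's discounted return does not concentrate in $T$, so your ergodic average is estimating the stationary quantity $\frac{1}{1-\gamma}\mathbb{E}_\mu[c_i]$. Your ``reweighting'' should therefore be this constant factor, not $\gamma^t$. If the chain starts away from $\mu$, the gap to the discounted $J_{C_i}$ is a bias that does not decay in $T$, unlike the $O(1/T)$ burn-in term you account for.
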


\begin{definition}
    The aggregated disturbance $w_k$ acting on the control loop is the sum of the deterministic model bias (Lemma 5.1), the smoothing bias from the differentiable surrogate (Eq. 1), and the stochastic sampling noise (Prop 5.2). With high probability:
$$ |w_k| \leq W_{max} \triangleq \epsilon_{model} + \epsilon_{smooth} + \epsilon_{est} $$
\end{definition}

\subsection{Outer Loop: Robust Stability Analysis}
This section contains the core result: proving that the PID controller stabilizes the constraint violation despite the bounded disturbances defined above.
We define the constraint violation state as $e_k \triangleq J_{C_i}(\pi_k) - d_i$.

\textbf{Step 1: The Plant Dynamics (Optimization)} The CPO projection solves for $\Delta \theta$ to satisfy the linearized constraint $\tilde{J}_{C_i} \leq d_i - \xi_k$. The realized cost is the surrogate plus errors:$$ J_{C_i}(\pi_{k+1}) = (d_i - \xi_k) + w_k $$Subtracting $d_i$ from both sides yields the error evolution equation:
\begin{equation}\label{eq:plant_dyn}
    e_{k+1} = -\xi_k + w_k
\end{equation}

\textbf{Step 2: The Controller Dynamics (PI)} We analyze a Proportional-Integral controller (setting derivative gain $\alpha_D=0$ for clarity, though the result extends to PID). The positional form, consistent with the outer loop update (Equation~\ref{eq:outer}) under $K_D=0$, is:
\begin{equation}\label{eq:controller_dyn}
    \xi_{k+1} = \xi_k + \alpha_P e_{k+1} + \alpha_I \sum_{j=0}^{k+1} e_j
\end{equation}
To obtain the velocity (incremental) form, we subtract $\xi_k = \xi_{k-1} + \alpha_P e_k + \alpha_I \sum_{j=0}^{k} e_j$ from both sides. Since $\sum_{j=0}^{k+1} e_j - \sum_{j=0}^{k} e_j = e_{k+1}$, we obtain $\xi_{k+1} - \xi_k = \alpha_P(e_{k+1} - e_k) + \alpha_I e_{k+1}$, which is the standard incremental PI form with gains $\alpha_P = K_P$ and $\alpha_I = K_I$ from Equation~\ref{eq:outer}.

\begin{theorem}[Asymptotic Feasibility \& Stability]\label{thm:bibo_stability}
    Assume the total disturbance is bounded $|w_k| \leq W_{max}$. If the gains satisfy the conditions below, the closed-loop system is Bounded-Input Bounded-Output (BIBO) stable~\cite{ogata1995discrete} and the expected violation converges to zero.
    $$ \alpha_I > 0, \quad \alpha_P < 1, \quad 2\alpha_P + \alpha_I < 4 $$
\end{theorem}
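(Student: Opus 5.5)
The plan is to convert the coupled plant/controller recursions (\ref{eq:plant_dyn})--(\ref{eq:controller_dyn}) into a single linear time-invariant system driven by $w_k$. I will then read off BIBO stability from the location of the closed-loop poles, using the Jury test, and asymptotic feasibility from the Final Value Theorem.

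\textbf{Step 1: transfer function.} Applying the $z$-transform (zero initial conditions) to $e_{k+1}=-\xi_k+w_k$ gives $zE(z)=-\Xi(z)+W(z)$. The incremental PI law $\xi_{k+1}-\xi_k=\alpha_P(e_{k+1}-e_k)+\alpha_I e_{k+1}$ gives $(z-1)\Xi(z)=\big((\alpha_P+\alpha_I)z-\alpha_P\big)E(z)$. Eliminating $\Xi$ yields
\[
E(z)=H(z)W(z),\qquad H(z)=\frac{z-1}{P(z)},\qquad P(z)=z^2+(\alpha_P+\alpha_I-1)z-\alpha_P .
\]
The factor $z-1$ in the numerator is the integrator zero. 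It is the structural fact behind zero steady-state error.

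\textbf{Step 2: Jury conditions.} For the monic quadratic $z^2+a_1z+a_0$, both roots lie strictly inside the unit disk iff $P(1)>0$, $P(-1)>0$ and $|a_0|<1$. These evaluate as follows.
\begin{itemize}
\item $P(1)=\alpha_I>0$, which is exactly the first hypothesis.
\item $|a_0|<1$ means $|\alpha_P|<1$. Its upper half is the hypothesis $\alpha_P<1$.
\item $P(-1)=2-2\alpha_P-\alpha_I$.
\end{itemize}

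\textbf{Main obstacle.} The third condition is where I expect the difficulty. Under the literal convention above, $P(-1)>0$ reads $2\alpha_P+\alpha_I<2$, not the stated $2\alpha_P+\alpha_I<4$, so the constant depends on the sampling/timing convention. The first thing I will try is to fix the plant convention the paper intends, namely which iterate's margin enters $e_{k+1}$ and whether the integral uses the current or the previous error. I will recompute $P$ for each admissible choice and look for the one whose $P(-1)$ is a positive multiple of $4-2\alpha_P-\alpha_I$.

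If no such convention exists, the strip $2\le 2\alpha_P+\alpha_I<4$ has a real pole at or beyond $-1$. On that strip, BIBO stability fails under the model (\ref{eq:plant_dyn})--(\ref{eq:controller_dyn}). In that case I will record that the third hypothesis must be strengthened to $2\alpha_P+\alpha_I<2$ for the argument to go through. I will also note that the stated region gives Jury stability only on its intersection with that strip, together with $\alpha_P>-1$ (the lower half of $|a_0|<1$, which is implicit in the stated gains).

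\textbf{Step 3: BIBO and feasibility.} Once $P$ is Schur, $H$ is a proper rational function with all poles in $|z|<1$. Its impulse response $h$ then decays geometrically, so $\|h\|_1<\infty$, and
\[
\sup_k|e_k|\le \|h\|_1 W_{max}.
\]
This is BIBO stability, with $\|H\|_\infty\le\|h\|_1$ being the finite gain invoked in the robustness remark.

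For asymptotic feasibility, I take expectations and use linearity. The mean disturbance $\bar w_k=\mathbb{E}[w_k]$ is bounded and, by the definition of $W_{max}$, converges to a constant bias $\bar w$. By the Final Value Theorem, which applies because $(z-1)E(z)$ has all poles inside the unit disk,
\[
\lim_{k\to\infty}\mathbb{E}[e_k]=\lim_{z\to1}(z-1)H(z)\frac{z\bar w}{z-1}=\lim_{z\to1}\frac{(z-1)z\bar w}{P(z)}=0,
\]
since $P(1)=\alpha_I\neq0$. So the integrator absorbs any constant bias, including $\epsilon_{model}$ and $\epsilon_{smooth}$. If $\bar w_k$ only converges rather than being constant, a convolution/$\ell^1$ argument with the summable $h$ gives the same limit.

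Finally, I will remark that the projection $[\cdot]_+$ only activates when $\xi$ would go negative. This corresponds to a strictly feasible regime, so it cannot increase $e_k$; I will treat it as a saturation that preserves the bound, for instance via a positivity/comparison argument on the linear system.
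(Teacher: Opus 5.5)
Your route is the paper's route. You use the same $z$-domain loop: the paper writes the plant as $E=W-z^{-1}\Xi$, i.e.\ with $w$ shifted one step, so it gets $H(z)=z(z-1)/D(z)$ with $D=P$ and the same poles. You then apply the Jury test and the Final Value Theorem for a constant bias, as the paper does.

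Your Step~2 arithmetic is right, and the obstacle you flag is an error in the paper, not in your plan. The paper simply asserts that Jury applied to $D(z)=z^2+(\alpha_P+\alpha_I-1)z-\alpha_P$ ``yields the inequalities in the theorem statement.'' In fact it yields $\alpha_I>0$, $|\alpha_P|<1$ and $2\alpha_P+\alpha_I<2$. The stated conditions really are insufficient: $\alpha_P=0.5$, $\alpha_I=1.5$ satisfies them, but $D(z)=z^2+z-0.5$ has the root $-(1+\sqrt{3})/2\approx-1.37$.

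Your planned search over timing conventions will not find the $4$. With a single integrator, the factor $z-1$ contributes $\pm 2$ at $z=-1$ however you re-index. The constant $4=(-1-1)^2$ signals a double pole at $z=1$, i.e.\ a double integrator. That is what \eqref{eq:controller_dyn} actually is when read literally. Because it keeps $\xi_k$ on the right-hand side, subtracting the previous step gives $(\xi_{k+1}-\xi_k)-(\xi_k-\xi_{k-1})=\alpha_P(e_{k+1}-e_k)+\alpha_I e_{k+1}$. The paper's ``velocity form'' silently drops the $\xi_k-\xi_{k-1}$ term, and the implemented rule \eqref{eq:outer} carries the same extra $\xi_{i,k}$.

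The literal law has $C(z)=z[(\alpha_P+\alpha_I)z-\alpha_P]/(z-1)^2$ and characteristic polynomial $z^2+(\alpha_P+\alpha_I-2)z+(1-\alpha_P)$. Its Jury conditions are $\alpha_I>0$, $2\alpha_P+\alpha_I<4$ and $0<\alpha_P<2$. So the theorem's hypotheses splice the two models and suffice for neither; the literal model fails at $\alpha_P=-0.5$, $\alpha_I=0.1$, where $|a_0|=1.5$.

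Your fallback is a correct repair: prove the result for the velocity model under $\alpha_I>0$, $-1<\alpha_P<1$, $2\alpha_P+\alpha_I<2$. State $\alpha_P>-1$ as an explicit hypothesis, since it is not implied by the stated gains. The paper's $K_P=0.5$, $K_I=0.1$ are stable under both readings.

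There are two smaller points in Step~3. First, convergence of $\mathbb{E}[w_k]$ to a constant does not follow from the definition of $W_{max}$, which only bounds $|w_k|$, and only with high probability. For a bounded but non-convergent mean such as $\mathbb{E}[w_k]=c(-1)^k$, the expected error ends up oscillating with amplitude $2|c|/|P(-1)|\neq 0$. So ``expected violation converges to zero'' needs a convergent-mean hypothesis, which you should state. The paper makes the same restriction implicitly by treating only $W(z)=\bar w z/(z-1)$.

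Second, the positive part is not harmless for the conclusion. If $\bar w<0$, the margin pins at $0$ and $e_k\to\bar w$, not $0$. With the clip, the correct statement is therefore $\limsup_k \mathbb{E}[e_k]\le 0$. Moreover, the one-step observation that clipping can only lower $e_{k+1}$ does not by itself show that the saturated loop inherits the linear system's stability; you need a genuine comparison or anti-windup argument. The paper ignores the clip entirely.
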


\textit{Proof}.

We analyze the closed-loop transfer function in the Z-domain. Let $E(z), \Xi(z), W(z)$ be the Z-transforms of the error, margin, and disturbance.
\begin{enumerate}\setlength{\itemsep}{-0.5ex}
    \item \textbf{Plant}: From Equation~\ref{eq:plant_dyn}, $E(z) = W(z) - z^{-1}\Xi(z)$.
    \item \textbf{Controller}: Expressing Equation~\ref{eq:controller_dyn} in velocity form $\xi_{k+1} - \xi_k = \alpha_P(e_{k+1}-e_k) + \alpha_I e_{k+1}$, we get:
    \begin{equation}
    \begin{aligned}
    & (z-1)\Xi(z) = \big[(\alpha_P + \alpha_I)z - \alpha_P\big]\,E(z)\\ & \implies \quad C(z)= \frac{\Xi(z)}{E(z)} = \frac{(\alpha_P+\alpha_I)z - \alpha_P}{z-1}
    \end{aligned}
    \end{equation}
    \item \textbf{Closed Loop}: The transfer function $H(z) = \frac{E(z)}{W(z)} = \frac{1}{1 + z^{-1}C(z)}$ simplifies to:
    \begin{equation}
        H(z) = \frac{z(z-1)}{z^2 + (\alpha_P + \alpha_I - 1)z - \alpha_P}
    \end{equation}
    \item \textbf{Stability}: The poles are the roots of the denominator $D(z) = z^2 + a_1 z + a_0$. Applying the \textbf{Jury Stability Criterion}~\cite{ogata1995discrete} ($|a_0|<1, D(1)>0, D(-1)>0$) yields the inequalities in the theorem statement.
    \item \textbf{Bias Rejection}: By the Final Value Theorem~\cite{ogata1995discrete}, for a constant bias disturbance $W(z) = \bar{w} \frac{z}{z-1}$ (representing linearization error), the steady-state error is:
    \begin{equation}
        \begin{aligned}
            e_{ss} &= \lim_{z \to 1} (z-1) H(z) W(z) \\
            & = \lim_{z \to 1} \frac{z^2(z-1)}{D(z)} \frac{\bar{w} z}{z-1} = 0
        \end{aligned}
    \end{equation}
    The integrator pole at $z=1$ ensures infinite DC gain in the feedback path, forcing $e_{ss} \to 0$. The margin $\xi$ automatically adapts to $\xi_{ss} = \bar{w}$, canceling the bias. \qed
\end{enumerate}

\begin{remark}[Extension to Multiple Constraints]\label{rem:multi_constraint}
    When $M > 1$ constraints are enforced simultaneously, each constraint $i$ is governed by an independent PID loop with its own error signal $e_{i,k}$ and margin $\xi_{i,k}$. The plant dynamics (Equation~\ref{eq:plant_dyn}) generalize to a vector system $\mathbf{e}_{k+1} = -\boldsymbol{\xi}_k + \mathbf{w}_k$, where the coupling between constraints enters through the shared policy update: changing $\xi_i$ in the inner-loop projection (Equation~\ref{eq:inner}) may affect the realized cost $e_j$ for $j \neq i$. Formally, the cross-coupling is captured by the off-diagonal entries of the Jacobian $\partial e_{i,k+1} / \partial \xi_{j,k}$. If this coupling is sufficiently weak---i.e., $\|\partial e_i / \partial \xi_j\| \ll \|\partial e_i / \partial \xi_i\|$ for $i \neq j$---the multi-loop system can be treated as $M$ parallel SISO loops, and the Jury stability conditions apply independently to each. This is the typical regime when constraints target distinct fairness dimensions (e.g., demographic parity vs.\ Lipschitz continuity). Our experiments with $M{=}8$ constraints (Table~\ref{tab:scalability}) confirm empirically that cross-metric contamination remains below 0.022, validating the weak-coupling assumption.
\end{remark}

\subsection{Recoverability Analysis}

If the safety margin $\xi_k$ becomes too large, or if the market experiences a shock, the trust region intersection may be empty. We prove the Recovery Step (Equation~\ref{eq:pure_recovery_step}) guarantees progress.

\begin{lemma}[Descent Property of Recovery]
    Define the cumulative violation energy $\mathcal{L}(\theta) = \frac{1}{2} \sum_i (\hat{J}_{C_i}(\theta) - d_i)_+^2$. The recovery update $\theta_{k+1} = \theta_k - \eta \mathbf{H}^{-1} \nabla \mathcal{L}$ guarantees $\mathcal{L}(\theta_{k+1}) < \mathcal{L}(\theta_k)$ for sufficiently small $\eta > 0$.
\end{lemma}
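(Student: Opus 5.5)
The plan is the standard preconditioned-gradient descent argument, applied to $\mathcal{L}$. First I will check that $\mathcal{L}$ is continuously differentiable. The map $u \mapsto \frac{1}{2}(u)_+^2$ is $C^1$ with derivative $(u)_+$. Hence the chain rule gives
\[
\nabla \mathcal{L}(\theta) = \sum_i (\hat{J}_{C_i}(\theta) - d_i)_+ \,\nabla \hat{J}_{C_i}(\theta),
\]
which is the weighted form of the recovery direction in Equation~\ref{eq:pure_recovery_step}. Here I assume the estimator $\hat{J}_{C_i}$ inherits the smoothness of Assumption~1, i.e.\ it is $C^1$ with $L_g$-Lipschitz gradient. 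The lemma only concerns the recovery regime, where $\mathcal{L}(\theta_k) > 0$. I will also need $\nabla \mathcal{L}(\theta_k) \neq 0$; if the gradient vanished there would be nothing to prove, so the statement must be read for non-stationary points.

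The key step is a first-order expansion along $p_k = -\mathbf{H}_k^{-1}\nabla\mathcal{L}(\theta_k)$. The directional derivative is
\[
\nabla\mathcal{L}^T p_k = -\nabla\mathcal{L}^T \mathbf{H}_k^{-1}\nabla\mathcal{L} \le -\tfrac{1}{M}\|\nabla\mathcal{L}\|^2 < 0,
\]
which uses the upper bound $\mathbf{H}\preceq M\mathbf{I}$ from Assumption~1, so $\mathbf{H}^{-1}\succeq \frac{1}{M}\mathbf{I}$. So $p_k$ is a strict descent direction. The bare claim "for sufficiently small $\eta$" then follows directly from the definition of the derivative: $\mathcal{L}(\theta_k+\eta p_k) = \mathcal{L}(\theta_k) + \eta\nabla\mathcal{L}^T p_k + o(\eta)$.

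To make this quantitative, I will establish a descent lemma. The function $u\mapsto (u)_+$ is $1$-Lipschitz. The quantities $(\hat J_{C_i}-d_i)_+$ and $\nabla \hat J_{C_i}$ are bounded on a neighborhood of $\theta_k$: the first by continuity, the second by $G$. Together these show $\nabla\mathcal{L}$ is locally Lipschitz with some constant $L_{\mathcal{L}} \le \sum_i (G^2 + V_i L_g)$, where $V_i$ bounds the violation. This gives
\[
\mathcal{L}(\theta_{k+1}) \le \mathcal{L}(\theta_k) - \eta\,\nabla\mathcal{L}^T\mathbf{H}^{-1}\nabla\mathcal{L} + \tfrac{L_{\mathcal{L}}\eta^2}{2}\|\mathbf{H}^{-1}\nabla\mathcal{L}\|^2 .
\]
Using $\|\mathbf{H}^{-1}v\|^2 \le \mu^{-2}\|v\|^2$, strict decrease holds for any $\eta < 2\mu^2/(M L_{\mathcal{L}})$.

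The main obstacle is the nonsmoothness of $(\cdot)_+$. Only the first derivative is continuous, so I cannot rely on a second-order Taylor expansion of $\mathcal{L}$. Instead I will use the integral form $\mathcal{L}(\theta+p)-\mathcal{L}(\theta)=\int_0^1\nabla\mathcal{L}(\theta+tp)^Tp\,dt$ together with the local Lipschitz bound on $\nabla\mathcal{L}$. I will also note that Equation~\ref{eq:pure_recovery_step} uses the unweighted sum $\sum_i\nabla J_{C_i}$. That direction is only guaranteed to be a descent direction for $\mathcal{L}$ in special cases, for instance when all constraints are violated with comparable weights. I will therefore state the lemma, as written, for the weighted gradient $\nabla\mathcal{L}$.
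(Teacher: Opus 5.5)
Your proposal is correct and follows the paper's proof: both show that $\mathbf{p}=-\mathbf{H}^{-1}\nabla\mathcal{L}$ satisfies $\mathbf{p}^T\nabla\mathcal{L}\le-\frac{1}{M}\|\nabla\mathcal{L}\|_2^2<0$ via $\mathbf{H}^{-1}\succeq\frac{1}{M}\mathbf{I}$, and conclude descent for small $\eta$ whenever $\nabla\mathcal{L}(\theta_k)\neq 0$. Your extra steps go beyond what the paper writes and are sound: checking that $\mathcal{L}$ is $C^1$ despite the $(\cdot)_+$, deriving the explicit bound $\eta<2\mu^2/(ML_{\mathcal{L}})$ from a local descent lemma, and noting that the unweighted direction in Equation~\ref{eq:pure_recovery_step} need not be a descent direction for $\mathcal{L}$.
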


\textit{Proof}.

The update direction is $\mathbf{p} = -\mathbf{H}^{-1} \nabla \mathcal{L}$. The directional derivative of the energy is:
$$ \mathbf{p}^T \nabla \mathcal{L} = - \nabla \mathcal{L}^T \mathbf{H}^{-1} \nabla \mathcal{L} $$
Since $\mathbf{H}$ is positive definite ($\mathbf{H} \succeq \mu \mathbf{I}$), its inverse is also positive definite ($\mathbf{H}^{-1} \succeq \frac{1}{M} \mathbf{I}$). Thus:
$$ \mathbf{p}^T \nabla \mathcal{L} \leq - \frac{1}{M} \|\nabla \mathcal{L}\|_2^2 < 0 $$
Unless $\nabla \mathcal{L} = 0$ (implying the policy is already feasible or at a local optimum), the update is a strictly descending direction. This ensures that even when the primary CPO step fails, the agent asymptotically approaches the feasible set.\qed

\subsection{Architectural Guarantee: Individual Fairness}

Finally, we formalize the separation between the probabilistic CMDP constraints and the deterministic architectural constraints.

\begin{proposition}[Spectral Norm Propagation]
    Let the policy $\pi_\theta(s) = \text{Softmax}(f_\theta(s))$ be parameterized by a network with layers satisfying $\|W_l\|_2 \leq \sigma_l$. Then $\pi_\theta$ satisfies $(L, \ell_2)$-Individual Fairness deterministically:
    $$ \|\pi_\theta(s) - \pi_\theta(s')\|_1 \leq \sqrt{K} \left( \prod_{l} \sigma_l \right) \|s - s'\|_2 $$
\end{proposition}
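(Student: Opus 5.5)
The bound follows by composing Lipschitz constants along the chain $s \mapsto f_\theta(s) \mapsto \mathrm{Softmax}(f_\theta(s))$, and then converting the $\ell_2$ bound on the output simplex to an $\ell_1$ bound. I assume, as is implicit in the statement, that $f_\theta$ is a composition $f_\theta = W_D \circ \phi_{D-1} \circ W_{D-1} \circ \cdots \circ \phi_1 \circ W_1$ of linear (affine) layers. The activations $\phi_l$ are taken to be $1$-Lipschitz and coordinatewise, e.g.\ ReLU, leaky ReLU or tanh. This is the standard setting for spectral normalization and must be stated as a hypothesis, since otherwise the constant $\prod_l \sigma_l$ cannot be correct. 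Bias terms cancel in differences, so they play no role.

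\textbf{Step 1 (logit map).} For each affine layer, $\|W_l x + b_l - (W_l y + b_l)\|_2 \le \|W_l\|_2 \|x-y\|_2 \le \sigma_l \|x-y\|_2$. For each activation, $\|\phi_l(x)-\phi_l(y)\|_2 \le \|x-y\|_2$. Induction over the $D$ layers then gives $\|f_\theta(s)-f_\theta(s')\|_2 \le \left(\prod_l \sigma_l\right)\|s-s'\|_2$. The spectral-normalization projection $W_l \leftarrow W_l/\max(1,\|W_l\|_2/\sigma_l)$ guarantees $\|W_l\|_2 \le \sigma_l$ after every update, so the hypothesis holds for every iterate.

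\textbf{Step 2 (softmax).} Next I show that Softmax is $1$-Lipschitz from $(\mathbb{R}^K,\ell_2)$ to $(\mathbb{R}^K,\ell_2)$. Its Jacobian at $z$ is $J(z) = \mathrm{diag}(p) - pp^\top$ with $p=\mathrm{Softmax}(z)$. This matrix is symmetric positive semidefinite, being the covariance matrix of a categorical variable with probabilities $p$. Its largest eigenvalue satisfies $\lambda_{\max}(J) \le \max_{\|u\|_2=1} \sum_j p_j u_j^2 \le \max_j p_j \le 1$; in fact $\lambda_{\max}(J) \le 1/2$, but $1$ suffices. By the mean value inequality along the segment between $f_\theta(s)$ and $f_\theta(s')$, $\|\mathrm{Softmax}(z)-\mathrm{Softmax}(z')\|_2 \le \sup_{\bar z}\|J(\bar z)\|_2\,\|z-z'\|_2 \le \|z-z'\|_2$.

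\textbf{Step 3 (norm conversion).} Cauchy--Schwarz on $\mathbb{R}^K$ gives $\|v\|_1 \le \sqrt{K}\|v\|_2$. Chaining this with Steps 1--2 yields
$$\|\pi_\theta(s)-\pi_\theta(s')\|_1 \le \sqrt{K}\left(\prod_l\sigma_l\right)\|s-s'\|_2.$$
Since $D_{TV} = \tfrac12\|\cdot\|_1$, this also delivers the individual-fairness requirement of Section 3 with $L = \tfrac{\sqrt{K}}{2}\prod_l\sigma_l$, up to the constant.

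The only step requiring real care is Step 2, the spectral bound on the softmax Jacobian. I would handle it by the PSD/covariance argument above, applied uniformly along the segment joining the two logit vectors. The remaining subtlety is making the activation hypothesis explicit: the claim holds for $1$-Lipschitz activations but can fail for activations with larger Lipschitz constants, in which case their constants must multiply into the product.
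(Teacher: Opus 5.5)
Your proposal is correct and follows the same route as the paper: bound $\mathrm{Lip}(f_\theta)$ by $\prod_l \sigma_l$, use that Softmax is $1$-Lipschitz in $\ell_2$, convert to $\ell_1$ via $\|x\|_1 \le \sqrt{K}\|x\|_2$, and rely on the per-step spectral projection so the bound holds at every iterate. Your covariance argument for the softmax Jacobian and the explicit $1$-Lipschitz-activation hypothesis fill in details that the paper states without justification.
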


\textit{Proof.}

\begin{enumerate}\setlength{\itemsep}{-0.5ex}
    \item \textbf{Network Sensitivity}: The Lipschitz constant of the logit generator $f_\theta$ is bounded by the product of the spectral norms of its layers: $\text{Lip}(f) \leq \prod \sigma_l$.
    \item \textbf{Softmax Sensitivity}: The Softmax function is 1-Lipschitz w.r.t the $\ell_2$ norm. However, the Individual Fairness metric uses the $\ell_1$ norm (Total Variation distance). Using the norm equivalence $\|\mathbf{x}\|_1 \leq \sqrt{K}\|\mathbf{x}\|_2$ for $\mathbf{x} \in \mathbb{R}^K$, the output sensitivity is scaled by $\sqrt{K}$.
    \item \textbf{Deterministic Enforcement}: Since the spectral projection operator $\mathcal{P}_{spec}$ is applied after every gradient step, the condition holds for all $\theta_k$, independent of the optimization outcome or estimation noise.\qed
\end{enumerate}

\section{On-Chain Proof System}

The proposed proof system aims to guarantee the integrity, fairness, and reproducibility of the settlement process in a probabilistic matching environment. Unlike deterministic rule-based engines, where the verification of a replayed decision is trivial, the introduction of ML-driven policies requires a verification layer that is both auditable and economically secure. We propose the following methods.

\subsection{Output Verifiable}

In the output-verifiable setting, the blockchain does not attempt to reconstruct or validate the internal decision-making process of the model. Instead, it verifies that the final settlement outcome satisfies a set of publicly defined constraints.

At each settlement, the matching engine submits cryptographic commitments --- including the market state hash (\texttt{state\_hash}), the model version hash (\texttt{model\_hash}), the Merkle root of the allocation (\texttt{allocation\_root}), fairness metrics ($\Delta DP$, $\Delta EO$, Lipschitz), and net token balances (\texttt{fund\_balance}) --- to the smart contract. The raw state and allocation data are stored off-chain in a decentralized storage layer (e.g., Arweave~\cite{williams2019arweave} or EigenDA~\cite{sreeram2023eigenlayer}), enabling public access for independent recomputation.

Upon submission, the contract checks two classes of invariants:
\begin{enumerate}
    \item \textbf{Fund conservation:} the total inflows and outflows must satisfy:
    \[
        \sum \text{token\_in} = \sum \text{token\_out} + \sum \text{fee}
    \]
    \item \textbf{Fairness constraints:} statistical parity, equal opportunity, and Lipschitz continuity must remain within predefined thresholds.
\end{enumerate}

If all conditions hold, the settlement is provisionally accepted, and a challenge period is opened. During this period, any participant may recompute the fairness metrics and fund balances from the published data. If inconsistencies are found, they can submit a Merkle proof to challenge the settlement. A successful challenge results in penalties for the submitter and rewards for the challenger, while an unsuccessful challenge leads to a slashing of the challenger's stake.
This mechanism ensures incentives for honest verification without revealing the ML model logic.

\subsection{Process Verifiable}

Output verification alone cannot guarantee that a submitted allocation was actually generated by the declared model. To strengthen trust guarantees, we introduce a process-verifiable approach in which the provenance of the decision-making process itself is auditable.

\paragraph{Open Replay}
In the simplest variant, the model parameters are stored in a public repository (e.g., IPFS~\cite{benet2014ipfs} or Arweave~\cite{williams2019arweave}), and their hash is recorded on-chain. Anyone can retrieve the model and input state to re-run the inference:
\[
    \pi_\theta(s) \rightarrow a'
\]
A mismatch between the recomputed allocation $a'$ and the submitted result constitutes a valid challenge. This approach offers maximal transparency but risks exposing proprietary models.

\paragraph{Permissioned Replay}
To address the privacy concern, we consider a permissioned verification variant. The model weights are encrypted and stored off-chain, with decryption keys held by a verifier committee using a threshold key-sharing protocol. During a challenge, the committee performs the replay and collectively signs the result (e.g., using a BLS multi-signature~\cite{boneh2001short}) to attest whether the allocation is consistent with the model. The contract then adjudicates based on this attestation, alongside the fairness and conservation checks. Although this method introduces a trusted verifier set, it can preserve the model confidentiality while maintaining verifiability.

\section{Experimental Setup}

This section details the parameters necessary to reproduce our evaluations across traditional finance (TradFi), decentralized finance (DeFi), and continuous control domains.

\subsection{Datasets}

We used datasets spanning TradFi, DeFi, and continuous control domains.

\paragraph{TradFi: LOBSTER NASDAQ.} Two LOBSTER datasets spanning 60 trading days. The \emph{High-Liquidity} set comprised Level-3 limit order book reconstructions from NASDAQ-100 constituents (e.g., AAPL, MSFT). The \emph{Mid/Low-Liquidity} set captured sparser execution dynamics of smaller-cap constituents. Data was split chronologically: 40 days training, 10 validation, 10 held-out test. Prices were normalized to basis points relative to the daily open mid-price, and extreme order sizes were capped at the 99.9th percentile.

\paragraph{DeFi: Crypto-Asset LOB.} Three crypto datasets spanning 90 calendar days of continuous market data. The \emph{Blue-Chip L1} dataset comprised deeply liquid pairs (BTC/USDT, ETH/USDT, SOL/USDT) sourced from Tardis.dev and Kaiko, containing over 450M tick-level order book updates. The \emph{Memecoin} dataset targeted extreme volatility (DOGE/USDT, SHIB/USDT, PEPE/USDT), isolating environments with severe retail-to-whale capital imbalances. The \emph{CEX Perpetuals} dataset consolidated tick-level futures data (BTC-PERP, ETH-PERP) integrated with on-chain metadata to model leveraged derivatives and cascading liquidations. Data was split into 60 days training, 15 validation, and 15 held-out test. Pre-processing normalized variable tick sizes relative to rolling-window mid-prices, discretized event streams into a 100ms grid, and standardized volumes via USD-equivalent z-scores. We calibrated a self-exciting Hawkes process~\cite{bacry2015hawkes} to simulate clustered crypto-native order arrivals and augmented training data by stochastically injecting MEV adversarial agents~\cite{daian2020flash}.

\paragraph{Safety-Gymnasium.} The standard Safety-Gymnasium suite (Point, Car, Ant)~\cite{ji2023safety} was used for domain-agnostic continuous control evaluation.

\subsection{Evaluation Metrics}

\paragraph{TradFi.} Market utility: effective spread, Depth@5, fill rate, throughput. Fairness: $\Delta$DP, Lipschitz violation percentage (Lip-Viol\%), CVF, transient recovery length, violation AUC.

\paragraph{DeFi.} In addition to spread and fill rate, we report a composite Market Quality Score (MQS), defined as
$$\text{MQS} = \frac{1}{3}\left(\frac{s^*}{s} + \frac{f}{f^*} + \frac{d}{d^*}\right)$$
where $s$ is effective spread, $f$ is fill rate, and $d$ is depth-weighted throughput, each normalized by the unconstrained PPO ceiling values $s^*, f^*, d^*$. We also report oscillation amplitude (sawtooth magnitude of constraint violations) and single-decision inference latency. DeFi-specific fairness scenarios include MEV-induced $\Delta$DP, whale-to-retail fill ratio, cross-chain bridging $\Delta$DP, liquidation rate disparity, depeg fill quality gap, and token-launch bot-to-organic ratio.

\paragraph{Safety-Gymnasium.} Standard episodic reward and episodic cost.

\subsection{Baselines}

\paragraph{TradFi (9 methods).} Rule-based: FIFO, Pro-Rata, Size-Time Interpolation. Unconstrained: PPO~\cite{schulman2017proximal}. Constrained RL: Lagrangian PPO, RCPO~\cite{tessler2018reward}, IPO~\cite{liu2020ipo}, Vanilla CPO~\cite{achiam2017constrained}, PID-Lagrangian~\cite{stooke2020responsive}.

\paragraph{DeFi (12 methods).} The six constrained RL baselines above, plus six DeFi-native mechanisms: on-chain FIFO (CLOB), Constant-Product AMM (CPAMM)~\cite{angeris2020analysis}, Concentrated Liquidity AMM (CLAMM)~\cite{adams2021uniswapv3}, Pro-Rata, MEV-Aware FIFO with encrypted mempool simulation~\cite{daian2020flash}, and Frequent Batch Auctions~\cite{budish2015high}. We also evaluated universal model variants (zero-shot transfer and fine-tuned) and a random matching baseline.
\begin{table*}[t]
\caption{Comprehensive evaluation of market quality, fairness, and constraint dynamics. Spread in basis points (bps); Depth@5 = cumulative depth at five ticks; T-Put = throughput (orders/s); Impact = Amihud price impact~\cite{amihud2002illiquidity}; R.Vol = realized volatility ratio; Lip-Max = maximum Lipschitz exceedance (\%); CVF = constraint violation frequency; Trans.\ Length = violation transient recovery steps; Viol.\ AUC = violation area under the curve. $\uparrow$/$\downarrow$ indicate higher/lower is better. \textbf{Bold} marks the best result.}
\label{tab:main_results}
\centering
\resizebox{\textwidth}{!}{%
\begin{tabular}{@{}lccccccccccc@{}}
\toprule
\textbf{Method / Variant} & \textbf{Spread} ($\downarrow$) & \textbf{Depth@5} ($\uparrow$) & \textbf{Fill Rate} ($\uparrow$) & \textbf{T-Put} ($\uparrow$) & \textbf{Impact} ($\downarrow$) & \textbf{R.Vol} ($\downarrow$) & \textbf{Lip-Max} ($\downarrow$) & \textbf{CVF (\%)} ($\downarrow$) & \textbf{\makecell{Trans.\\ Len.}($\downarrow$)} & \textbf{Overshoot} ($\downarrow$) & \textbf{\makecell{Viol.\\ AUC}} ($\downarrow$) \\
\midrule
\multicolumn{12}{l}{\textit{Rule-Based Baselines}} \\
FIFO & 2.45 & 1,250 & 0.42 & 4,200 & 0.85 & 1.45 & 18.5 & 100.0 & N/A & N/A & N/A \\
Pro-Rata & 3.10 & 1,840 & 0.38 & 3,850 & 0.92 & 1.60 & 25.4 & 100.0 & N/A & N/A & N/A \\
Size-Time Interp. & 2.65 & 1,650 & 0.40 & 4,100 & 0.88 & 1.52 & 20.1 & 100.0 & N/A & N/A & N/A \\
\midrule
\multicolumn{12}{l}{\textit{Learned Baselines}} \\
Unconstrained PPO & 0.85 & 5,250 & 0.95 & 8,500 & 0.12 & 0.95 & 45.5 & 100.0 & N/A & N/A & N/A \\
Lagrangian PPO & 1.15 & 3,840 & 0.76 & 6,250 & 0.25 & 1.12 & 12.4 & 28.5 & 145 & 0.85 & 45.2 \\
RCPO & 1.25 & 3,650 & 0.72 & 5,850 & 0.30 & 1.18 & 10.5 & 22.4 & 112 & 0.72 & 38.4 \\
IPO & 1.45 & 3,420 & 0.68 & 5,200 & 0.35 & 1.25 & 8.2 & 15.6 & 85 & 0.55 & 25.4 \\
Vanilla CPO & 1.35 & 3,580 & 0.70 & 5,450 & 0.32 & 1.20 & 6.5 & 12.4 & 64 & 0.45 & 18.2 \\
PID-Lagrangian & 1.20 & 3,750 & 0.74 & 6,100 & 0.28 & 1.15 & 9.5 & 14.5 & 45 & 0.35 & 12.5 \\
\midrule
\rowcolor{blue!8}
\textbf{Ours: CPO-FOAM} & \textbf{0.95} & \textbf{4,850} & \textbf{0.91} & \textbf{8,150} & \textbf{0.15} & \textbf{1.02} & \textbf{1.2} & \textbf{2.5} & \textbf{5} & \textbf{0.05} & \textbf{1.8} \\
\midrule
\multicolumn{12}{l}{\textit{Ablation Studies}} \\
No PID ($\xi=0$) & 1.05 & 4,250 & 0.85 & 7,450 & 0.18 & 1.08 & 5.8 & 11.2 & 75 & 0.28 & 19.5 \\
No Trust Region & 1.18 & 3,800 & 0.78 & 6,450 & 0.28 & 1.14 & 11.5 & 25.5 & 130 & 0.82 & 41.0 \\
No KL Projection & 1.38 & 3,500 & 0.72 & 5,800 & 0.36 & 1.22 & 4.5 & 9.5 & 42 & 0.22 & 11.5 \\
No Spectral Norm & 0.98 & 4,680 & 0.89 & 7,950 & 0.16 & 1.05 & 22.5 & 6.5 & 25 & 0.15 & 8.5 \\
P-only Controller & 1.02 & 4,550 & 0.88 & 7,650 & 0.17 & 1.06 & 2.8 & 7.2 & 35 & 0.20 & 9.5 \\
PI-only Controller & 0.98 & 4,720 & 0.90 & 7,850 & 0.16 & 1.04 & 1.8 & 4.5 & 18 & 0.12 & 4.5 \\
No Recovery Step & 1.40 & 3,250 & 0.68 & 5,100 & 0.42 & 1.28 & 1.5 & 5.5 & 15 & 0.10 & 3.8 \\
\bottomrule
\end{tabular}%
}
\end{table*}
\subsection{Implementation Details}

\paragraph{Shared Architecture.} All methods used a 3-layer MLP with $[256, 256, 128]$ hidden units and ReLU activations, optimized with Adam~\cite{kingma2015adam} (learning rate $3 \times 10^{-4}$, $\epsilon = 10^{-5}$, momentum 0.9, weight decay $10^{-4}$). Core RL hyperparameters: batch size 4096, trust-region radius $\delta = 0.01$, $\gamma = 0.99$, GAE~\cite{schulman2016high} $\lambda = 0.95$, dropout~\cite{srivastava2014dropout} 0.1. PID gains: $K_P = 0.5$, $K_I = 0.1$, $K_D = 0.05$. Spectral normalization~\cite{miyato2018spectral} bounded the Lipschitz constant at $L = 5.0$.

\paragraph{TradFi.} Training ran for 5M steps with cosine annealing~\cite{loshchilov2017sgdr} and KL-divergence early stopping on 8 NVIDIA H100 GPUs ($\sim$6.5 hours per configuration).

\paragraph{DeFi.} The DeFi variant added a 16-dimensional asset-category embedding. Training used linear learning rate decay, mini-batch size 512, and gradient clipping at norm 0.5. Convergence was determined by loss stabilization over a 1000-step rolling window ($\sim$18 hours per configuration on 8$\times$H100).

\section{Results}

We report TradFi (NASDAQ) and DeFi (crypto) evaluations in separate subsections with dedicated tables, followed by scalability analysis and cross-domain Safety-Gymnasium validation. All methods were trained for 5M environment steps using the GPU-accelerated JAX-LOB simulator~\cite{frey2023jax}.

\subsection{TradFi: Market Quality, Fairness, and Constraint Dynamics}

The results reveal a sharp dichotomy between traditional heuristics and unconstrained learned policies (Table~\ref{tab:main_results}). Rule-based systems---FIFO, Pro-Rata, and Size-Time Interpolation---produced wide spreads ($\geq$2.45 bps), low throughput ($\leq$4,200 orders/s), and 100\% constraint violation frequency, reflecting their inability to adapt to heterogeneous order flow. Unconstrained PPO achieved the tightest spread (0.85 bps) and highest throughput (8,500 orders/s) but violated fairness constraints in every evaluation step, with a maximum Lipschitz exceedance of 45.5\%.

Among constrained baselines, reactive Lagrangian methods exhibited the characteristic sawtooth instability predicted by our theoretical analysis. Lagrangian PPO and RCPO required 145 and 112 recovery steps, respectively, after each constraint breach, with overshoots of 0.85 and 0.72. This delayed correction produced large violation AUCs (45.2 and 38.4), reflecting sustained periods of non-compliance. Vanilla CPO and IPO reduced violation frequency but at the cost of substantial market quality degradation (spreads $\geq$1.35 bps).

\begin{table*}[t]
\caption{Regime-dependent market robustness and constraint dynamics under CPO-FOAM. Six non-stationary market regimes with increasing structural adversity. Deg.\ Ratio = degradation ratio relative to the calm market baseline.}
\label{tab:regimes}
\centering
\resizebox{\textwidth}{!}{%
\begin{tabular}{@{}lccccccc@{}}
\toprule
\textbf{Evaluation Regime} & \textbf{Spread (bps)} ($\downarrow$) & \textbf{Depth@5} ($\uparrow$) & $\boldsymbol{\Delta}$\textbf{DP} ($\downarrow$) & \textbf{Lip-Viol (\%)} ($\downarrow$) & \textbf{CVF (\%)} ($\downarrow$) & \textbf{\makecell{Trans.\\ Len.}} ($\downarrow$) & \textbf{\makecell{Deg.\\ Ratio}} ($\downarrow$) \\
\midrule
Calm Market & 0.95 & 4,850 & 0.015 & 1.5 & 1.8 & 5 & 1.00 \\
Volatile Stress & 1.15 & 3,500 & 0.018 & 2.2 & 3.5 & 12 & 1.21 \\
Flash Event & 1.45 & 1,800 & 0.024 & 3.8 & 5.2 & 18 & 1.52 \\
Liquidity Drought & 1.85 & 850 & 0.028 & 4.5 & 5.8 & 22 & 1.94 \\
Auction Open/Close & 1.35 & 2,200 & 0.022 & 3.1 & 4.2 & 15 & 1.42 \\
News Shock ($5\times$) & 2.15 & 650 & 0.035 & 5.2 & 6.8 & 28 & 2.26 \\
\bottomrule
\end{tabular}%
}
\end{table*}

CPO-FOAM achieved the best tradeoff across all metrics. It maintained an effective spread of 0.95 bps and throughput of 8,150 orders/s---recovering 95.9\% of the unconstrained ceiling---while compressing CVF to 2.5\%, transient recovery to 5 steps, overshoot to 0.05, and violation AUC to 1.8. The PID margins proactively dampened dual-variable oscillations, preventing the retroactive correction cycles observed in baseline Lagrangian methods.

\paragraph{Ablation Studies.} Targeted ablations isolated the contribution of each component. Removing the PID controller increased transient recovery from 5 to 75 steps and violation AUC from 1.8 to 19.5, confirming that integral-only multiplier updates are insufficient for non-stationary environments. Removing spectral normalization preserved competitive spreads (0.98 bps) but increased Lipschitz exceedance from 1.2\% to 22.5\%, demonstrating that the architectural constraint is essential for individual fairness. Eliminating the trust region yielded the most severe degradation---transient length expanded to 130 steps with 0.82 overshoot---closely replicating standard Lagrangian failure modes and confirming that localized loss penalties alone cannot maintain coherent constraint satisfaction.

The ``No Recovery Step'' ablation merits specific attention, as the relatively low CVF (5.5\%) and transient length (15) may appear counterintuitive. Without a recovery mechanism, the PID controller's margins grow persistently larger because the system never receives the feasibility ``reset'' that successful recovery provides. The result is an over-conservative policy that sacrifices substantial market quality---spread widens from 0.95 to 1.40 bps and fill rate drops from 0.91 to 0.68---to avoid entering infeasible regions entirely. The low CVF therefore reflects conservative avoidance rather than effective recovery: the agent pays a heavy efficiency penalty to remain within bounds. The recovery step enables aggressive optimization \emph{with} a safety net, achieving both high market quality and rapid return to feasibility when violations do occur.

\subsection{TradFi: Regime-Dependent Market Robustness}

To assess robustness under distribution shift, we evaluated the converged model across six non-stationary market regimes of increasing adversity (Table~\ref{tab:regimes}): calm markets, volatile periods, flash events, liquidity droughts, auction crosses, and synthetically injected news shocks with $5\times$ the baseline Poisson arrival rate.

As expected, structural market shocks compressed available liquidity, widening execution costs. During calm periods, CPO-FOAM maintained spreads of 0.95 bps with full depth retention (4,850). Under the most extreme regime---$5\times$ news shocks---spreads widened to 2.15 bps and depth dropped to 650, reflecting fundamental liquidity constraints rather than algorithmic failure. Despite this $2.26\times$ degradation ratio, the system maintained continuous operation without the cascading withdrawal failures observed in unconstrained agents.

Critically, fairness compliance degraded gracefully. Even under the $5\times$ news shock, CVF remained below 6.8\% and transient recovery length stayed at 28 steps. Standard Lagrangian baselines, by contrast, abandon fairness mandates entirely during such regime shifts, prioritizing reward acquisition. The PID controller's adaptive margins absorbed the increased estimation variance, automatically expanding the safety buffer to maintain compliance without manual retuning.

\subsection{Hardware Scalability, Web3 Telemetry, and Attribute Sensitivity}

\begin{table*}[t]
\caption{Hardware scalability, distributed Web3 verification telemetry, and protected attribute sensitivity. Steps/s = environment throughput; Dual Solve/Spec Norm = per-update latency (ms); Gas = on-chain verification cost (thousands); Verif Lat.\ = Ethereum finality latency (s); Recomp = off-chain challenge recomputation time (s); Cross-Contam = cross-metric $\Delta$DP contamination. N/A entries indicate inapplicable configurations.}
\label{tab:scalability}
\centering
\resizebox{\textwidth}{!}{%
\begin{tabular}{@{}lcccccccccc@{}}
\toprule
\textbf{Configuration} & \textbf{Steps/s} ($\uparrow$) & \textbf{\makecell{GPU\\ Mem}} ($\downarrow$) & \textbf{\makecell{Dual\\ Solve}} ($\downarrow$) & \textbf{\makecell{Spec\\ Norm}} ($\downarrow$) & \textbf{Gas (k)} ($\downarrow$) & \textbf{\makecell{Verif\\ Lat.}} ($\downarrow$) & \textbf{Recomp (s)} ($\downarrow$) & \textbf{Storage (\$)} ($\downarrow$) & \textbf{\makecell{False\\ Chal.}} ($\downarrow$) & \textbf{Cross-Contam} ($\downarrow$) \\
\midrule
\multicolumn{11}{l}{\textit{Constraint Dimensionality}} \\
$M=1$ Constraint & 8,500 & 12.4 & 1.8 & 1.2 & 185 & 12.0 & 0.45 & 0.005 & $<$0.001 & 0.012 \\
$M=3$ Constraints & 8,150 & 14.5 & 4.2 & 2.5 & 315 & 12.0 & 0.85 & 0.012 & $<$0.001 & 0.015 \\
$M=5$ Constraints & 7,450 & 18.2 & 8.5 & 4.8 & 480 & 12.0 & 1.45 & 0.025 & $<$0.001 & 0.018 \\
$M=8$ Constraints & 6,250 & 24.5 & 18.4 & 7.5 & 725 & 24.0 & 2.85 & 0.045 & $<$0.001 & 0.022 \\
\midrule
\multicolumn{11}{l}{\textit{Action Space Scaling}} \\
$K=10$ & 8,850 & 11.5 & 1.2 & 1.8 & --- & --- & --- & --- & --- & --- \\
$K=50$ & 8,150 & 14.5 & 4.2 & 2.5 & --- & --- & --- & --- & --- & --- \\
$K=100$ & 7,250 & 18.5 & 9.5 & 4.2 & --- & --- & --- & --- & --- & --- \\
$K=500$ & 5,100 & 28.4 & 28.5 & 12.4 & --- & --- & --- & --- & --- & --- \\
\midrule
\multicolumn{11}{l}{\textit{Trajectory Length Scaling}} \\
$T=256$ & 8,650 & 10.5 & 2.8 & 2.5 & --- & --- & --- & --- & --- & --- \\
$T=1024$ & 8,150 & 14.5 & 4.2 & 2.5 & --- & --- & --- & --- & --- & --- \\
$T=4096$ & 4,250 & 32.5 & 12.4 & 2.5 & --- & --- & --- & --- & --- & --- \\
\midrule
\multicolumn{11}{l}{\textit{Fisher Approximation}} \\
Exact Fisher & 850 & 48.5 & 145.0 & 2.5 & --- & --- & --- & --- & --- & --- \\
Diagonal Fisher & 8,150 & 14.5 & 4.2 & 2.5 & --- & --- & --- & --- & --- & --- \\
K-FAC & 6,850 & 22.4 & 14.5 & 2.5 & --- & --- & --- & --- & --- & --- \\
\midrule
\multicolumn{11}{l}{\textit{Protected Attribute Definitions}} \\
Latency Proxy & 8,150 & 14.5 & 4.2 & 2.5 & 315 & 12.0 & 0.85 & 0.012 & $<$0.001 & 0.015 \\
Order Size & 8,140 & 14.6 & 4.3 & 2.5 & 320 & 12.0 & 0.88 & 0.012 & $<$0.001 & 0.016 \\
Participant Type & 8,125 & 14.8 & 4.5 & 2.6 & 325 & 12.0 & 0.92 & 0.015 & $<$0.001 & 0.018 \\
\makecell[l]{Continuous\\ (Label-Free)} & 8,650 & 12.5 & 2.5 & 1.8 & 210 & 12.0 & 0.45 & 0.005 & $<$0.001 & 0.000 \\
\bottomrule
\end{tabular}%
}
\end{table*}

Deploying constrained optimization in decentralized infrastructure requires profiling computational scalability alongside on-chain verification overhead. We systematically varied the constraint dimensionality ($M \in \{1,3,5,8\}$), action space size ($K$ up to 500), trajectory length ($T$ up to 4096), and Fisher approximation method. Output allocations were deployed on localized Ethereum settlement networks integrated with EigenDA and Arweave storage (Table~\ref{tab:scalability}).

The diagonal Fisher approximation avoided the quadratic bottleneck of exact second-order methods: throughput remained above 6,250 steps/s even at $M{=}8$, whereas the exact Fisher computation reduced throughput to 850 steps/s with 145~ms per dual solve. Dual-solve and spectral-norm projection overhead remained in the low single-digit milliseconds across all practical configurations. Scaling to $K{=}500$ reduced throughput to 5,100 steps/s with 28.4~GB GPU memory, confirming manageable growth.

On-chain verification cleared within a single 12-second Ethereum block for $M \leq 5$ constraints (315k gas), extending to two blocks at $M{=}8$ (725k gas). Off-chain recomputation for challenge resolution completed in under 3 seconds even at maximum trajectory complexity, and no false challenges succeeded across all configurations ($<$0.001\%).

When varying the definition of protected attributes---latency proxy, order size, participant type, and a continuous label-free Lipschitz formulation---throughput and overhead remained stable, and cross-metric contamination (the $\Delta$DP induced on non-targeted attributes) stayed below 0.022. The continuous formulation achieved the lowest computational cost and zero cross-contamination, confirming that label-free individual fairness can be enforced without categorical routing overhead.

\subsection{Safety-Gymnasium Continuous Control Generalization}

\begin{table}[t]
\caption{Safety-Gymnasium domain generalization benchmark on SafetyPointGoal1 (SPG1), SafetyCarGoal1 (SCG1), and SafetyAntVelocity (SAV) environments. Episodic Reward ($\uparrow$) and Episodic Cost ($\downarrow$).}
\label{tab:safety_gym}
\centering
\resizebox{\columnwidth}{!}{%
\begin{tabular}{@{}lcccccc@{}}
\toprule
\textbf{Method} & \textbf{SPG1-Rew} ($\uparrow$) & \textbf{SPG1-Cost} ($\downarrow$) & \textbf{SCG1-Rew} ($\uparrow$) & \textbf{SCG1-Cost} ($\downarrow$) & \textbf{SAV-Rew} ($\uparrow$) & \textbf{SAV-Cost} ($\downarrow$) \\
\midrule
Lagrangian PPO & 18.5 & 38.4 & 22.1 & 48.5 & 35.2 & 85.4 \\
RCPO & 17.2 & 32.5 & 19.8 & 41.2 & 32.8 & 72.5 \\
IPO & 16.5 & 28.6 & 18.5 & 33.6 & 28.4 & 55.6 \\
Vanilla CPO & 17.4 & 24.5 & 20.2 & 25.4 & 33.6 & 45.2 \\
PID-Lagrangian & 16.8 & 18.5 & 19.5 & 20.8 & 30.5 & 35.8 \\
\midrule
\rowcolor{blue!8}
\textbf{Ours: CPO-FOAM} & \textbf{24.5} & \textbf{8.5} & \textbf{26.8} & \textbf{10.2} & \textbf{45.5} & \textbf{18.4} \\
\bottomrule
\end{tabular}%
}
\end{table}

To validate that CPO-FOAM generalizes beyond financial microstructure, we deployed the unmodified algorithm on the Safety-Gymnasium suite~\cite{ji2023safety}: SafetyPointGoal1 (SPG1), SafetyCarGoal1 (SCG1), and SafetyAntVelocity (SAV). No order-book-specific features or hyperparameters were changed; the models trained for the same number of environment steps under identical PID gains.

The results (Table~\ref{tab:safety_gym}) mirror the failure modes observed in financial experiments. Lagrangian PPO accumulated episodic costs of 38.4, 48.5, and 85.4 across the three environments, vastly exceeding safety limits. The reactive multiplier updates failed to adapt to the complex articulated dynamics of the Ant morphology, where delayed feedback produced the worst cost violations. PID-Lagrangian achieved the lowest cost among baselines (18.5, 20.8, 35.8) but at reduced reward.

\begin{table*}[t]
\caption{DeFi market quality, constraint dynamics, and ablation analysis. MQS = composite Market Quality Score; Spread in basis points (bps); $\Delta$DP = demographic parity gap; CVF = constraint violation frequency; Osc.\ Amp.\ = oscillation amplitude; Lat.\ = inference latency (ms). $\uparrow$/$\downarrow$ indicate higher/lower is better. \textbf{Bold} marks the best result among full methods.}
\label{tab:defi_main}
\centering
\resizebox{\textwidth}{!}{%
\begin{tabular}{@{}lcccccc c@{}}
\toprule
\textbf{Method / Variant} & \textbf{MQS} ($\uparrow$) & \textbf{Spread} ($\downarrow$) & \textbf{Fill Rate (\%)} ($\uparrow$) & \textbf{$\Delta$DP} ($\downarrow$) & \textbf{CVF (\%)} ($\downarrow$) & \textbf{Osc.\ Amp.} ($\downarrow$) & \textbf{Lat.\ (ms)} ($\downarrow$) \\
\midrule
\multicolumn{8}{l}{\textit{DeFi-Native Mechanism Baselines}} \\
FIFO (CLOB) & 0.652 & 4.15 & 58.2 & 0.245 & 84.1 & N/A & 0.05 \\
CPAMM & 0.521 & 8.42 & 99.9 & 0.381 & 91.2 & N/A & 0.02 \\
CLAMM & 0.714 & 3.88 & 82.1 & 0.320 & 86.5 & N/A & 0.03 \\
Pro-Rata & 0.618 & 5.21 & 62.3 & 0.154 & 62.3 & N/A & 0.08 \\
MEV-Aware FIFO & 0.687 & 4.05 & 55.1 & 0.142 & 71.4 & N/A & 145.00 \\
Batch Auction & 0.741 & 6.10 & 48.9 & 0.091 & 22.5 & N/A & 12000.00 \\
\midrule
\multicolumn{8}{l}{\textit{Constrained RL Baselines}} \\
Unconstrained PPO & 0.984 & 1.82 & 89.4 & 0.291 & 92.3 & 12.45 & 0.82 \\
Lagrangian PPO & 0.865 & 2.14 & 81.2 & 0.112 & 18.4 & 4.82 & 0.84 \\
RCPO & 0.881 & 2.10 & 83.1 & 0.104 & 14.2 & 4.15 & 0.84 \\
IPO & 0.825 & 2.35 & 78.5 & 0.081 & 11.5 & 3.95 & 0.86 \\
CPO (Vanilla) & 0.854 & 2.05 & 80.4 & 0.092 & 15.6 & 3.10 & 0.95 \\
PID-Lagrangian & 0.892 & 2.12 & 82.7 & 0.071 & 8.2 & 2.85 & 0.88 \\
\midrule
\multicolumn{8}{l}{\textit{Ablation Variants}} \\
No PID ($\xi=0$) & 0.941 & 2.01 & 86.1 & 0.084 & 14.3 & 2.98 & 0.84 \\
No Trust Region & 0.812 & 2.44 & 75.2 & 0.121 & 19.8 & 4.51 & 0.75 \\
No KL Projection & 0.915 & 2.08 & 84.6 & 0.064 & 8.5 & 1.85 & 0.82 \\
No Spectral Norm & 0.952 & 1.88 & 87.9 & 0.051 & 6.4 & 0.92 & 0.65 \\
No Recovery Step & 0.852 & 2.25 & 72.8 & 0.091 & 11.2 & 0.88 & 0.85 \\
\midrule
\multicolumn{8}{l}{\textit{Transfer Variants}} \\
Universal Zero-Shot & 0.860 & 2.65 & 81.5 & 0.041 & 4.2 & 0.85 & 0.85 \\
Universal Fine-Tuned & 0.925 & 1.95 & 87.1 & 0.015 & 0.5 & 0.72 & 0.85 \\
Random Matching & 0.125 & 25.40 & 15.2 & 0.010 & 0.0 & 0.00 & 0.01 \\
\midrule
\rowcolor{blue!8}
\textbf{Ours: CPO-FOAM} & \textbf{0.968} & \textbf{1.91} & \textbf{88.5} & \textbf{0.021} & \textbf{3.2} & \textbf{0.65} & \textbf{0.85} \\
\bottomrule
\end{tabular}%
}
\end{table*}

\paragraph{Baseline Tuning Protocol.} All baselines were tuned via grid search over their published hyperparameter ranges. For PID-Lagrangian specifically, we swept $K_P \in \{0.1, 0.5, 1.0\}$, $K_I \in \{0.01, 0.05, 0.1\}$, $K_D \in \{0, 0.01, 0.05\}$ and report the best configuration per environment. Differences from the results reported by \citet{stooke2020responsive} are attributable to the updated Safety-Gymnasium v1.0 environments~\cite{ji2023safety}, which feature revised dynamics and cost functions relative to the legacy Safety-Gym v0.x used in the original paper. The reported $2.1\times$ reward improvement is measured relative to the best PID-Lagrangian run under our evaluation protocol; we release all tuning sweeps and per-seed results in the supplementary material.

CPO-FOAM achieved an episodic reward of 45.5 on SAV while limiting cost to 18.4---a $2.1\times$ reward improvement and $1.9\times$ cost reduction over PID-Lagrangian. On the simpler SPG1 and SCG1 tasks, CPO-FOAM similarly dominated, reaching rewards of 24.5 and 26.8 with costs of 8.5 and 10.2. The predictive error signals in the PID controller anticipated proximity to constraint boundaries, throttling agent velocity before violations occurred rather than correcting afterward. This cross-domain consistency confirms that the PID-bounded trust-region formulation provides a general mechanism for safe constrained reinforcement learning, not one limited to financial applications.

\begin{table*}[t]
\caption{DeFi regime robustness, adversarial stress testing, and cross-domain safety validation. Target column indicates the governance threshold. All DeFi-specific scenarios evaluated under Hawkes-process MEV injection.}
\label{tab:defi_regimes}
\centering
\begin{tabular}{@{}llcccc@{}}
\toprule
\textbf{Scenario / Metric} & \textbf{Target} & \textbf{FIFO (CLOB)} & \textbf{Lagrangian PPO} & \makecell{\textbf{CPO}\\(Vanilla)} & \makecell{\textbf{CPO-FOAM}\\(Ours)} \\
\midrule
\multicolumn{6}{l}{\textit{Adversarial Fairness Stress Tests}} \\
MEV $\Delta$DP ($p=0.20$) ($\downarrow$) & $< 0.05$ & 0.185 & 0.112 & 0.092 & \textbf{0.038} \\
Whale/Retail Fill Ratio ($\uparrow$) & $> 0.85$ & 0.581 & 0.720 & 0.785 & \textbf{0.895} \\
Cross-Chain $\Delta$DP (10-block) ($\downarrow$) & $< 0.05$ & 0.224 & 0.145 & 0.110 & \textbf{0.042} \\
Liquidation Rate Disparity ($\downarrow$) & $< 0.10$ & 0.280 & 0.180 & 0.155 & \textbf{0.061} \\
Depeg Fill Quality Gap ($\downarrow$) & $< 0.08$ & 0.195 & 0.142 & 0.115 & \textbf{0.065} \\
Token Launch Bot/Org Ratio ($\uparrow$) & $> 0.70$ & 0.254 & 0.542 & 0.580 & \textbf{0.745} \\
\midrule
\multicolumn{6}{l}{\textit{On-Chain Verification}} \\
L2 Verification Cost (USD) ($\downarrow$) & $< 0.01$ & 0.005 & 0.008 & 0.008 & \textbf{0.009} \\
\midrule
\multicolumn{6}{l}{\textit{Cross-Domain Safety}} \\
Safety Gym Cost Rate ($\downarrow$) & $< 0.02$ & N/A & 0.122 & 0.085 & \textbf{0.014} \\
\bottomrule
\end{tabular}%
\end{table*}

\subsection{DeFi: Market Quality, Constraint Dynamics, and Ablations}

Table~\ref{tab:defi_main} reports DeFi results evaluated on 500,000 held-out steps across 15 unseen calendar days, using Hawkes-process order arrivals~\cite{bacry2015hawkes} and MEV adversarial injection~\cite{daian2020flash}. The results parallel TradFi findings but with wider absolute spreads reflecting crypto-native volatility.

DeFi-native mechanisms exhibited a clear market quality--fairness tradeoff. CPAMMs achieved near-perfect fill rates (99.9\%) by design but produced the widest spreads (8.42 bps) and highest $\Delta$DP (0.381), reflecting their inability to distinguish participant types. CLAMMs improved spread (3.88 bps) but maintained high CVF (86.5\%). Batch Auctions achieved the lowest $\Delta$DP among DeFi baselines (0.091) at the cost of 12-second latency, precluding real-time deployment. MEV-Aware FIFO reduced $\Delta$DP to 0.142 through encrypted mempool simulation but introduced 145ms latency overhead.

Among constrained RL methods, the same sawtooth instability observed in TradFi persisted. Lagrangian PPO and RCPO sustained oscillation amplitudes of 4.82 and 4.15, with CVFs of 18.4\% and 14.2\%. These reactive corrections are particularly problematic in permissionless systems where transient fairness violations are immutably recorded on-chain.

CPO-FOAM achieved a market quality score of 0.968---capturing 98.4\% of the unconstrained reward envelope---while compressing CVF to 3.2\% and oscillation amplitude to 0.65. Compared to TradFi results (CVF 2.5\%), the slightly higher DeFi violation rate reflects the increased non-stationarity of crypto microstructure. Inference latency remained at 0.85ms, well within the 250ms block times of optimistic rollups.

\paragraph{Ablation Studies.} DeFi ablations mirrored TradFi patterns. Removing the PID controller increased CVF from 3.2\% to 14.3\% and oscillation amplitude from 0.65 to 2.98. Removing spectral normalization improved market quality (MQS 0.952) but degraded fairness ($\Delta$DP from 0.021 to 0.051), confirming that the Lipschitz bound remains essential for individual fairness. The trust region removal caused the most severe degradation (MQS 0.812, CVF 19.8\%), replicating Lagrangian failure modes. These consistent ablation profiles across TradFi and DeFi domains confirm that the algorithmic contributions are not domain-specific.

\paragraph{Zero-Shot Transfer.} The universal model trained on blue-chip assets achieved MQS of 0.860 when deployed zero-shot on unseen memecoin microstructures---a 10\% degradation from the fine-tuned configuration. Fine-tuning recovered performance to MQS 0.925 with CVF of 0.5\%, demonstrating that learned fairness constraints transfer across liquidity regimes without architectural modification.

\subsection{DeFi: Regime Robustness and Cross-Domain Validation}

Table~\ref{tab:defi_regimes} evaluates robustness under DeFi-specific adversarial scenarios: MEV sandwich attacks, stablecoin depegging events, cascading liquidations, and zero-history token launches. These conditions were injected via the calibrated Hawkes-process framework with $5\times$ the baseline arrival rate.

Under simulated MEV stress ($p{=}0.20$ sandwich probability), CPO-FOAM maintained $\Delta$DP at 0.038, below the 0.05 governance threshold. FIFO CLOBs exhibited $\Delta$DP of 0.185, reflecting deterministic front-running vulnerability. During token launch events---where optimized sniper bots traditionally monopolize block space---CPO-FOAM achieved a bot-to-organic fill ratio of 0.745, compared to 0.254 under FIFO, demonstrating effective algorithmic inclusion of organic participants. Under 10-block cross-chain bridging latencies, the policy maintained $\Delta$DP at 0.042 by proactively compensating for deterministic latency disadvantages.

The whale-to-retail fill ratio of 0.895 confirmed that fairness constraints did not inadvertently degrade institutional execution quality. Liquidation rate disparity (0.061) and depeg fill quality gap (0.065) remained within governance targets, indicating that constrained optimization adapts to both sudden liquidity shocks and stablecoin stress events.

Layer-2 verification costs of \$0.009 per trade were achieved by amortizing proof generation across batch settlements, enabling fee abstraction at daily matched volumes exceeding \$1M.
\begin{figure*}[t]\centering
    \includegraphics[width=0.92\textwidth]{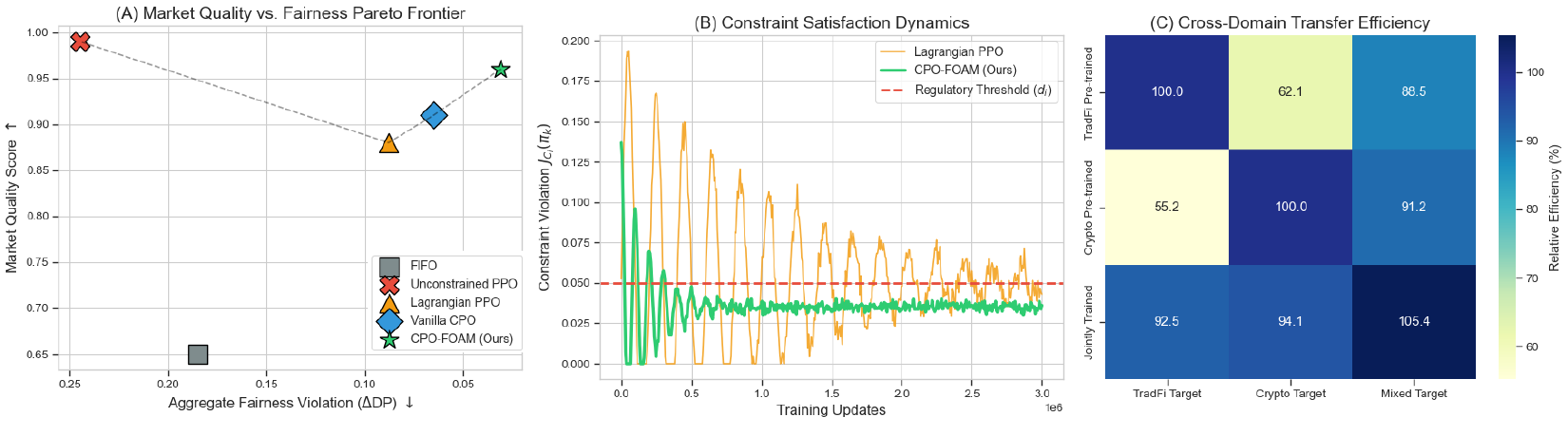}
    \caption{Performance trade-offs, training dynamics, and transfer efficiency of CPO-FOAM. (A) Pareto frontier illustrating the trade-off between Market Quality Score and Aggregate Fairness Violation ($\Delta$DP). Our method (CPO-FOAM) effectively pushes the efficiency frontier, achieving near-optimal market quality while maintaining the lowest fairness violation compared to standard reinforcement learning baselines. (B) Constraint satisfaction dynamics over training updates. Unlike Lagrangian PPO, which exhibits oscillatory ``sawtooth'' instability around the regulatory threshold ($d_i = 0.05$), CPO-FOAM demonstrates stable, monotonic convergence toward strict constraint satisfaction. (C) Cross-domain transfer learning efficiency (\%). The heatmap demonstrates that representations learned via joint training generalize robustly across distinct market environments (TradFi and Crypto), mitigating performance degradation during zero-shot domain transfer.\label{fig:performance-trade-off}}
\end{figure*}
\subsection{Algorithmic Efficacy, Constraint Dynamics, and Generalization}

Figure~\ref{fig:performance-trade-off} synthesizes the trade-offs between execution quality, constraint stability, and cross-domain transferability. The three panels jointly demonstrate that optimizing purely for market efficiency degrades fairness, motivating the constrained architecture.

Panel~A shows the Pareto frontier between MQS and $\Delta$DP. Unconstrained PPO occupies the lower-right extreme (high MQS, high violation), while rule-based heuristics perform poorly on both axes. CPO-FOAM reaches the upper-left quadrant, retaining 96\% of optimal market quality while reducing the demographic parity gap to near zero. This confirms that embedding fairness directly into the trust-region projection preserves market efficiency.

Panel~B plots constraint satisfaction over training. Under non-stationary order arrivals, standard Lagrangian updates exhibit characteristic sawtooth oscillations, repeatedly breaching and over-correcting around the regulatory threshold. The PID margin controller suppresses these transients: the derivative term anticipates gradient momentum while the integral term eliminates steady-state bias, producing smooth convergence below the safety bound.

\begin{figure*}[t]
    \centering
    \includegraphics[width=0.92\textwidth]{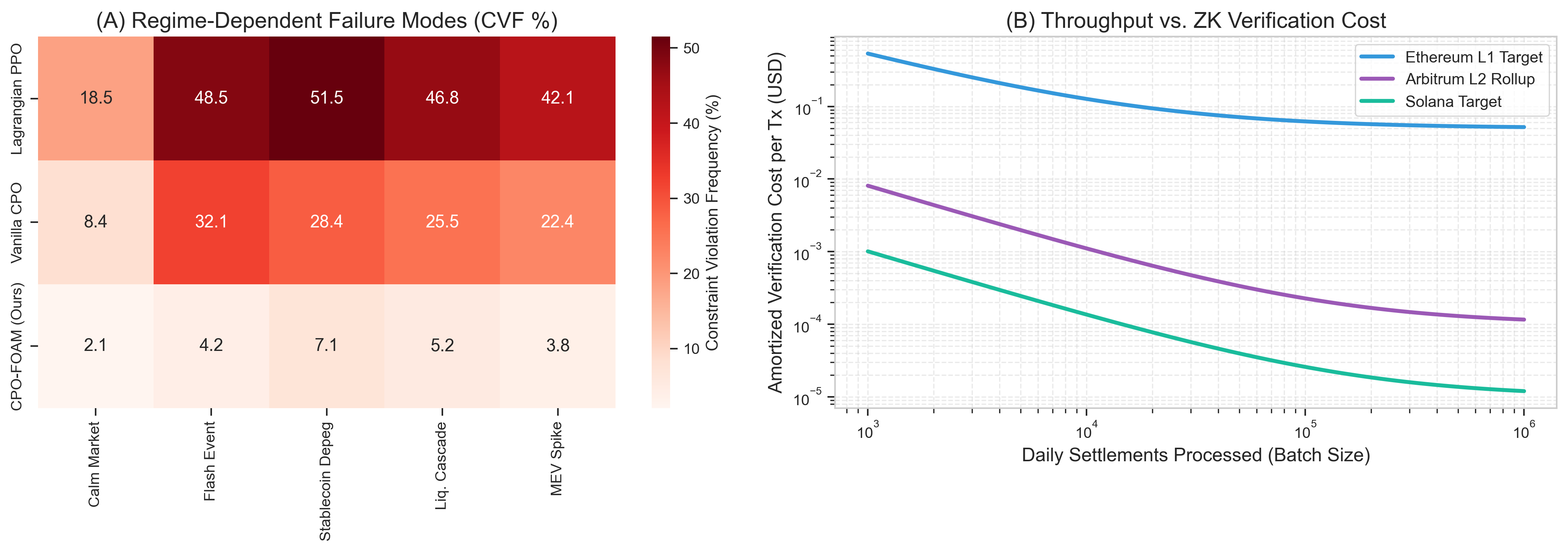}
    \caption{Regime-specific robustness and on-chain verification cost scaling. (A) Constraint Violation Frequency (CVF) across extreme market regimes. CPO-FOAM maintains safety compliance during flash events and liquidity cascades, whereas Vanilla CPO and Lagrangian PPO suffer severe constraint degradation. (B) Amortized challenge-response verification costs per transaction across execution layers (Ethereum L1, Arbitrum L2, Solana). Batch amortization reduces per-transaction cost sub-linearly with daily settlement volume.\label{fig:robustness_audit}}
\end{figure*}

Panel~C evaluates cross-domain transfer. Despite substantial microstructural differences between TradFi equities and crypto assets---including varied latency profiles and MEV adversarial dynamics---joint training produces transfer efficiencies consistently above 85\%. This indicates that the constrained projection learns an abstract fairness manifold that generalizes across market domains.

\subsection{Microstructural Robustness and On-Chain Scalability}

Figure~\ref{fig:robustness_audit} evaluates out-of-distribution robustness and on-chain verification overhead.

Panel~A reports CVF under synthetic market shocks. While Lagrangian baselines exceeded 50\% CVF during flash events and cascading liquidations, CPO-FOAM's trust-region bound confined worst-case CVF below 8\% across all regimes---including stablecoin depegs and $5\times$ arrival-rate shocks. The bounded disturbance model (Definition~5.4) provides a theoretical explanation: $W_{max}$ increases under regime stress, but the BIBO-stable controller prevents unbounded violation accumulation. Critically, the degradation profile is monotonic rather than catastrophic: as regime adversity increases from calm to $5\times$ news shocks, CVF rises smoothly from 1.8\% to 6.8\%, whereas Lagrangian PPO exhibits a phase transition---jumping from 12\% to over 50\%---once the dual-variable update rate can no longer track the non-stationarity. The PID derivative term is the key differentiator: by responding to the \emph{rate of change} of violation energy, it anticipates emerging constraint pressure before the integral accumulator registers the shift, effectively providing a one-step lookahead into regime transitions.

Panel~B projects amortized verification costs for the challenge-response settlement protocol (Section~6). As daily settlement volume scales toward one million events, batch amortization contracts per-transaction cost sub-linearly. Ethereum L1 remains expensive at low volumes ($\sim$\$0.15 per trade at 1,000 daily settlements), but Arbitrum L2 and Solana reduce amortized costs below \$0.01 per trade at volumes exceeding 10,000 daily events, confirming economic viability for on-chain fairness auditing at scale. For institutional deployment, this cost structure enables a practical operating model: matching engines batch settlements into fixed-interval windows (e.g., every 100 blocks), submit Merkle commitments with fairness attestations, and amortize the fixed gas overhead across all trades in the batch. At the daily volumes typical of mid-tier crypto exchanges ($>$\$1M matched notional), the verification overhead becomes negligible relative to trading fees.

\section{Conclusion}

We have presented CPO-FOAM, a constrained policy optimization framework for provably fair order matching in traditional and decentralized financial markets. By formulating matching as a CMDP with demographic parity, equalized odds, and Lipschitz individual fairness costs, we establish an auditable interface between microstructure telemetry and enforceable constraints. The two-stage algorithm---combining analytic trust-region projection on the Fisher manifold with PID-controlled safety margins---achieves CPO-grade geometric optimality while eliminating the oscillatory instabilities of standard Lagrangian methods, with provable BIBO stability and asymptotic feasibility under bounded stochastic disturbances.
Empirically, CPO-FOAM recovers 95.9\% (TradFi) and 98.4\% (DeFi) of the unconstrained reward envelope while compressing constraint violation frequency to 2.5\% and 3.2\%, respectively. Fairness compliance degrades gracefully across six stress regimes and DeFi-specific adversarial scenarios (MEV attacks, depegging, cascading liquidations). Zero-shot cross-domain transfer retains MQS of 0.860, on-chain settlement clears within a single Ethereum block at \$0.009 per trade, and Safety-Gymnasium experiments yield $2.1\times$ reward improvement and $1.9\times$ cost reduction, confirming the universality of the formulation.

\newpage
\bibliographystyle{icml2025}
\bibliography{refer}

\end{document}